\documentclass[a4paper,12pt]{article}

\usepackage{amssymb}
\usepackage{latexsym}
\usepackage{amsmath}
\usepackage{amsthm}
\usepackage{bm}
\usepackage{graphicx}
\usepackage{hyperref}
\usepackage[margin=1.25in]{geometry}
\usepackage{authblk}
\newcommand{\pv}{pos}
\newcommand{\pvi}{\mathit{posI}}
\newcommand{\pve}{\mathit{posE}}
\newcommand{\pvf}{\mathit{posF}}
\usepackage{graphicx}
\usepackage{caption}
\usepackage{tikz}
\usepackage{color}
\usepackage{cases}
\usepackage{enumerate}
\usepackage[inline,shortlabels]{enumitem}

\setcounter{MaxMatrixCols}{10}
\usepackage{tikz-network}  
\newtheorem{remark}{Remark}

\newtheorem{lemma}{Lemma}

\newtheorem{proposition}{Proposition}
\newtheorem{theorem}{Theorem}

\begin{document}

\title{Peer Selection with Friends and Enemies\footnote{Marcin Dziubi\'{n}ski's work was supported by the Polish National Science Centre through grant 2021/42/E/HS4/00196.}}
\author{Francis Bloch\footnote{Universit\'{e} Paris 1 and Paris School of Economics, 48
Boulevard Jourdan, 75014 Paris, France.
\texttt{francis.bloch@univ-paris1.fr}},   Bhaskar Dutta\footnote{University of Warwick and Ashoka University, CV4 7AL, Coventry, UK \texttt{b.dutta@warwick.ac.uk}} and Marcin Dziubi\'{n}ski\footnote{Institute of Informatics, University of Warsaw, Banacha 2, 02-097, Warsaw, Poland   \texttt{m.dziubinski@mimuw.edu.pl}}}

\maketitle

\begin{abstract}
\small

A planner wants to select one agent out of n agents on the basis of a binary characteristic that is commonly known to all agents but is not observed by the planner. Any pair of agents can either be friends or enemies or impartials of each other. An individual's most preferred outcome is that she be selected. If she is not selected, then she would prefer that a friend be selected, and if neither she herself or a friend is selected, then she would prefer that an impartial agent be selected. Finally, her least preferred outcome is that an enemy be selected.
The planner wants to design a dominant strategy incentive compatible  mechanism in order to be able choose a desirable agent. We derive sufficient conditions for existence of efficient and DSIC mechanisms  when the planner knows the  bilateral relationships between agents. We also show that if the planner does not know  these relationships, then there is no efficient and DSIC mechanism and we compare the relative efficiency of two ``second-best'' DSIC mechanisms. Finally, we obtain sharp characterization results when the network of friends and enemies satisfies structural balance. 
\normalsize
\end{abstract}

\textbf{Keywords:} Peer selection, Network, Mechanism design without money, Dominant strategy incentive compatibility, 

\textbf{Declarations of interest:} none

\textbf{JEL: } D82, D86

\newpage
\baselineskip=1.45\baselineskip

\section{Introduction}

We consider a selection problem where the planner has to select one individual out of a group of $n$ individuals on the basis of a binary characteristic. For instance, an agent could be either \lq\lq needy" or \lq\lq not needy",  \lq\lq high" or \lq\lq low"  ability, \lq\lq deserving" or \lq\lq not  deserving".   Depending on the context, the chosen agent will then be given a cash grant, or promotion or job offer. To fix ideas, we will henceforth 	denote  the binary characteristic to be one of being needy or not needy. Moreover, the agent selected by the planner will be given a cash grant.

A mechanism design problem arises because the planner does not observe  the neediness status of any agent.  However, all agents know each others' neediness status. This is of course also true in the classical peer selection problem discussed originally by Alon et al \cite{alon2011sum} and Holzman and Moulin \cite{holzman2013impartial}.   However, our framework, while related, differs in important ways from  the Alon-Holzman-Moulin classical framework. 

The planner's problem in our setting is considerably more complicated because all pairs of agents  are also related to each other by bilateral relations of \emph{friendship} and \emph{enmity}.  That is, each pair $i, j$ are either \textit{friends}, \textit{enemies} or \textit{impartial}. These relationships give rise to networks of friends, enemies or impartials.  These bilateral relationships influence preferences. Each agent strictly prefers being selected. If not selected herself, then she would like a friend to be selected regardless of the friend's binary characteristic.  Moreover, she strictly prefers that no enemy be selected, again irrespective of the enemy's neediness status.  Obviously,  any incentive  compatible (probabilistic) peer selection mechanism must ensure that   no agent is able to influence his or her own probability of being selected.  In our setting, incentive compatibility imposes an additional requirement -- agents cannot influence the probability of their friends and enemies from being selected.

We focus on the possibility of constructing \textit{probabilistic mechanisms} under which (i) revealing truthful information will be a weakly dominant strategy for each agent, and (ii) the support of the probability distribution corresponding to truthful reports will be contained in the set of needy agents when the latter set is nonempty.\footnote{The latter is the requirement    of efficiency --  if a needy agent exists, then only a needy agent should be selected.} In contrast, the classical Holzman-Moulin-Alon framework focussed on \textit{deterministic} mechanisms.
We consider two informational scenarios. In the first, the planner knows either the friendship, enemy or impartial relationships of every agent. We derive different sufficient conditions  which allow  the planner to construct three probabilistic  mechanisms  satisfying the two requirements.\footnote{Note that Holzman and Moulin prove that in the deterministic framework, there is \textit{no} efficient mechanism that will induce agents to tell the truth as a dominant  strategy. The difference in results illustrates the difference in frameworks.} 
We have not been able to show whether these sufficient conditions characterise the class of all mechanisms satisfying the two requirements.

We then go onto consider the case when the planner does not know the bilateral relationships of friends and enemies. We first show that in this setting, there is no incentive compatible mechanism that is efficient in the sense of  ensuring that some needy agent is picked whenever there is at least one needy agent. Although there are differences between our framework and   the probabilistic framework of Gibbard \cite{gibbard1977manipulation}, we show that adaptations of both  \textit{random dictatorships} and \textit{duples} are DSIC mechanisms.   We compare these two mechanisms in terms of their relative efficiency.  We show that if every individual has at least two friends, then the random dictatorship is at least as efficient as the duple mechanism. Otherwise, under plausible conditions on the bilateral relationships, the duple mechanism is better than the random dictatorship on a worst-case basis. Both mechanisms are better than the constant mechanism that picks an agent at random.

Finally, we consider networks which satisfy \textit{structural balance} -- for which the enemy of an enemy is a friend, and the friend of an enemy an enemy.\footnote{Structural balance was introduced by Heider \cite{heider1946attitudes} in social psychology, and formalized by Cartwright and Harary \cite{cartwright1956structural} } When structural balance is satisfied we are able to provide a necessary and sufficient condition for existence of an efficient, dominant strategy incentive compatible mechanism. We also compute a lower bound on efficiency of the duples mechanism when the planner does not know the network.

 \section{Related Literature}

 Our paper relates to several strands of the literature. It is most closely related to the large literature in economics and computer science on peer selection mechanisms.   Holzman and Moulin \cite{holzman2013impartial}  provide an axiomatic analysis of  \textbf{deterministic} ``impartial''  or dominant strategy incentive compatible voting rules when individuals nominate a single individual for office. 
  Alon et al. \cite{alon2011sum} consider the problem of designing a dominant strategy mechanism to select $k$ individuals from a group of peers.  They show that  no deterministic efficient  dominant strategy mechanism exists, and then go on to construct approximately efficient, stochastic, dominant strategy mechanisms.\footnote{See also Tamura and Ohseto  \cite{tamura2014impartial}, Berga and Gjorjiev \cite{berga2014impartial}, Mackenzie  \cite{mackenzie2019axiomatic}, Fischer and Klimm \cite{fischer2015optimal}, Bousquet, Norin and Vetta \cite{bousquet2014near}, Kurokawa et al. \cite{kurokawa2015impartial} and Aziz et al. \cite{aziz2016strategyproof} and \cite{aziz2019strategyproof}, Bjelde, Fischer and Klimm \cite{bjelde2017impartial} and Babichenko, Dean and Tenneholtz \cite{babichenko2020incentive}.
}

The peer selection problem  has also been studied in the context of social networks, where the social network is a network of observation, describing who can observe whom in the network.  Bloch and Olckers \cite{bloch2022friend} characterize network structures for which efficient, ex post incentive compatible mechanisms exist. \footnote{See also Baumann \cite{baumann2018identifying}. }

 Faced with the impossibility of constructing an efficient, dominant strategy mechanism, Ben Porath, Dekel and Lipman  \cite{ben2014optimal}  propose to add another instrument for the planner, and allow her to check the report of the agents at a cost.  Mylonanov and Zapechelnuyk \cite{mylovanov2017optimal} analyze the problem when verification can only happen ex post, after the object has been allocated. Kattwinkel \cite{kattwinkel2023costless}, Pereyra and Silva  \cite{pereyra2023optimal} and Bloch, Dutta and Dziubi\'{n}ski \cite{bloch2023selecting} analyze situations where, in addition to the agent's reports, the planner can use a correlated signal on the agents' types, and show how the planner can leverage this correlation to construct efficient Bayesian incentive compatible mechanisms.
 
The entire literature on peer selection assumes that players only care about being selected, and that there are no externalities in players' utilities. Our paper thus departs from the existing literature since players also care about who is selected if they themselves are not.  This is related to the biased juror problem where jurors may deviate from the true ranking of candidates since they may be biased for or against specific candidates. The biased jury problem was initiated by Amor\'{o}s et al. \cite{amoros2002scholarship} and Amor\'{o}s \cite{amoros2009eliciting} in a series of very interesting papers.\footnote{See also  Amor\'{o}s \cite{amoros2011natural},  Adachi \cite{adachi2014natural}, Amor\'{o}s \cite{amoros2013picking},  Amor\'{o}s\cite{amoros2016subgame}, Amor\'{o}s \cite{amoros2025eliciting}, Yadav \cite{yadav2016selecting}.}   A significant difference between the biased jury and peer selection literature is that in the former, the selectors (the jury) are not themselves candidates. Another  difference   between our paper and the papers mentioned here is that they focus on deterministic mechanisms and$\backslash$or different solution concepts such as Nash equilibrium or subgame perfection.

   \section{The Model}

There is a set  $V = \{1,\ldots,n\}$ of agents.  Nature selects an agent to be   needy with probability $q >0$.  Let 
 $N$   denote the realized set of \textit{needy} agents. 
 
Any pair  $i, j \in V\times V$  are either \textit{friends}, \textit{enemies} or \textit{impartials}. Of course,  these are symmetric relationships. 
 For any $i \in V$, 
$ F_i$, $E_i$, $I_i$ are the set of friends, enemies and impartials of $i$. So, $\{F_i, E_i, I_i\} $ form a \textit{partition} of $V\setminus \{i\}$. 

We assume that  every agent in $V$ knows the set $N$ of needy agents as well as her own set of friends, enemies and impartials. We are agnostic about whether $i$ knows anything else about the  networks of friends, enemies and impartials.    We can then represent the type of any agent as 
 a triple $\theta_i= (N, F_i, E_i)$. Notice that we can ignore $I_i$ since $I_i\equiv V \setminus (\{i\} \cup F_i\cup E_i)$. 
 Let $\Theta_i$ represent the set of all possible types of agent $i$. 

The planner does not know the set $N$. We consider two alternative informational scenarios. In one scenario, the planner knows either the sets $I_i$ or   $F_i$ or $E_i$  of each agent.  In the other informational scenario, the planner knows nothing about  these sets.  In either case, the planner wants to use a mechanism that will induce agents to report the set $N$ truthfully.

\textbf{Mechanism}

We are interested in direct mechanisms where each agent reports her type. So, a typical  message profile, to be denoted  by $\bm{m}$ will be an element of  $\bm{\Theta} = \prod_{i \in V} \Theta_i$. 
A (direct) mechanism is defined by an outcome function $g : \bm{\Theta} \rightarrow [0,1]^V$, with $g_i(\bm{m} ) $ representing the probability of  agent $i$ being chosen.   The mechanism is \textit{valid} if for all  message profiles $\bm{m}$,
\[ \sum_{i \in V} g_i(\bm{m}) \le 1.\]
Notice that this keeps open the possibility that the planner may not choose any agent at some message profiles. 

Throughout, the rest of the paper we restrict attention to valid mechanisms. 

\textbf{Preferences}

Let $p=(p_1, \ldots, p_n)$ and $p'=(p'_1, \ldots, p'_n)$ be two vectors where $p_i, p_j$, $p'_i,p'_j$ etc. are probabilities  with which $i$ and $j$ are chosen corresponding to  different message profiles.

Agents have lexicographic preferences over   such probability vectors. First,  $i$ checks her own probability of being chosen.  She strictly prefers $p$ to $p'$ if $p_i >p'_i$. 

Second,  she prefers  her friends being chosen and dislikes her enemies being chosen. She assigns weights  $w_f >0, w_e>0$ and  ranks vectors $p$ and $p'$ so that 
\begin{equation*} 
p\succ_i p' \mbox{ if } \begin{cases} p_i >p'_i \\
p_i=p'_i \mbox{ and  } \sum_{j \in F_i} w_f ( p_j -p'_j) - \sum_{j \in E_i} w_e ( p_j -p'_j) >0\end{cases}\end{equation*}

It is clear that in our setting, preferences do not depend upon the set $N$. So, for instance, each individual $i$  is indifferent between whether friend $j$ or friend $k$ is selected, regardless of the neediness status of $j$ and $k$.

A mechanism $g$ is \emph{dominant strategy incentive compatible} (DSIC) if and only if, for all $i \in V$, all $\bm{m} \in \bm{\Theta}$ and all $m'_i \in \Theta_i$, 
\begin{equation*}
p \succsim_i p', \mbox{ where } p= g(\bm{m}), p'= g(m'_i,\bm{m}_{-i}).
\end{equation*}
That is, revealing the true state must be  a dominant strategy for $i$ at all states.

The planner also wants to be able to pick a needy agent whenever the set of needy agents is non-empty. This gives rise to the next definition. 

 A mechanism $g$ is \emph{efficient} if and only if, for any state of the world, $\theta$ with truthful message profile $\bm{m} \in \bm{\Theta}$ , if $N \ne \varnothing$, then $\sum_{i \in N } g_i(\bm{m})=1$.

 \section{Known Relationships Networks}
 
 In this section, we assume that the planner knows either the network of impartials or the network of enemies or the network of friends. In each case, we derive sufficient conditions under which the planner can construct DSIC mechanisms that are also efficient.
 
Since the planner is assumed to know the relevant networks, all the mechanisms considered in this section ignore reports about the sets $I_i, E_i, F_i$, and only use the reports on the neediness status of others.  So, in order to simplify the notation,   we will   identify message $m_i$ sent by an agent $i$ with the set of needy agents reported by $i$. In particular, $j\in m_i$ means that $i$ reports that $j$ is needy and $j \notin m_i$ means that $j$ is not needy according to $i$.

 \medskip
 
 \subsection{ Planner knows the sets $I_i$ of each $i$}
 
 \medskip

The mechanism we construct  exploits the fact that impartials do not have an incentive to lie about the neediness status of their impartial neighbours provided their reports do not affect their own probability of being chosen.

 Given a message profile $\bm{m}$ and any pair of agents $i, j \in V$, the  message $m_j$ is called a \emph{positive vote of $j$ on $i$} if $i \in m_j$, while it is called a \emph{ negative vote}  if $i\notin m_j$.   In addition, given a message profile $\bm{m} $ and a set of agents $X \subseteq V$, $\bm{m}_{X} = \{m_j\}_{j \in X}$ is the message profile $\bm{m}$ restricted to agents in $X$. In the case of $X = V$, we simply write $\bm{m}$ instead of $\bm{m}_V$ and in the case of $X = V\setminus \{i\}$ we write $\bm{m}_{-i}$ instead of $\bm{m}_{V\setminus \{i\}}$.

Given a message profile $\bm{m}$, let $\pv(\bm{m}_X)$ be the set of agents $j\in V$ receiving positive votes from all their impartial agents in $X$, i.e. from all agents in $X\cap I_j$.

We construct the following mechanism $g^1$. 
The probability of agent $i\in V$ being selected under message profile $\bm{m}$ is
\begin{equation*}
g^1_i(\bm{m}) = \begin{cases}
                \frac{1}{|\pv(\bm{m}_{I_i})|}, & \textrm{if $i \in \pv(\bm{m}_{I_i})$,}\\
                0, & \textrm{otherwise.}
                \end{cases}
\end{equation*}

So, for each agent $i \in V$, the mechanism first checks whether the agent receives positive votes from all her impartial  agents.  If $i$ passes this test, then  the mechanism  assigns $i$ the probability that is the reciprocal of  the number of all the agents, $j \in V$ who received positive votes from the agents that are impartial  to both $i$ and $j$,  i.e. all the agents in $I_i \cap I_j$.  If $i$ receives even one negative vote from some $j \in I_i$,  then $g^1$ assigns zero probability to $i$. 

The following condition will play an important role.

\medskip
\textbf{ The Intersection  Condition I}:  For all $i,j \in V$, $I_i \cap I_j \neq \varnothing$. 

\medskip

The next result describes  the properties of $g^1$. 
\begin{theorem}
\label{pr:g1}
Suppose the planner knows the sets $I_i$ for all $i \in V$. Then,
 $g^1$ is valid and DSIC. Moreover, if the Intersection Condition I is satisfied,  then $g^1$ is efficient.
\end{theorem}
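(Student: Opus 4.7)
The plan is to verify validity, DSIC, and—under the Intersection Condition I—efficiency in turn, each resting on a short structural observation about the set $\pv(\bm{m}_{I_i})$.

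For validity, let $P = \{i \in V : i \in \pv(\bm{m}_{I_i})\}$ be the set of agents that receive strictly positive probability under $g^1$ at profile $\bm{m}$. The key inclusion is $P \subseteq \pv(\bm{m}_{I_i})$ for every $i \in P$: if $j \in P$, then every $k \in I_j$ casts a positive vote on $j$, hence in particular every $k \in I_i \cap I_j \subseteq I_j$ does, placing $j$ in $\pv(\bm{m}_{I_i})$. This gives $|\pv(\bm{m}_{I_i})| \ge |P|$ for each $i \in P$, and therefore $\sum_{i \in V} g^1_i(\bm{m}) = \sum_{i \in P} g^1_i(\bm{m}) \le |P| \cdot \tfrac{1}{|P|} = 1$.

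For DSIC, I would argue that any agent $i$'s report $m_i$ leaves $g^1_j(\bm{m})$ unchanged for every $j \notin I_i$, in particular for $j = i$ and for every $j \in F_i \cup E_i$. Indeed $g^1_j(\bm{m})$ depends only on (a) whether $j \in \pv(\bm{m}_{I_j})$ and (b) the cardinality $|\pv(\bm{m}_{I_j})|$; condition (a) is fixed by messages of agents in $I_j$, while (b) is fixed by messages of agents in $I_j \cap I_k$ as $k$ ranges over $V$, and in both cases $m_i$ can matter only when $i \in I_j$, i.e.\ $j \in I_i$. Since $i$'s lexicographic preference looks only at her own probability and at the weighted sum $\sum_{j \in F_i} w_f p_j - \sum_{j \in E_i} w_e p_j$, none of which responds to $m_i$, every report is $i$-indifferent to truth-telling, so truth-telling is weakly dominant.

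Finally, I would combine ICI with truthful reporting to obtain efficiency. Under truth, $k \in I_i \cap I_j$ reports $j$ as needy precisely when $j \in N$; hence $j \in \pv(\bm{m}_{I_i})$ iff $j \in N$ or $I_i \cap I_j = \varnothing$, and ICI rules out the second alternative, giving $\pv(\bm{m}_{I_i}) = N$ for every $i \in V$. When $N \ne \varnothing$ this forces $g^1_i(\bm{m}) = 1/|N|$ for $i \in N$ and $0$ otherwise, so $\sum_{i \in N} g^1_i(\bm{m}) = 1$. I expect the main obstacle to be the validity step, because the denominator in $g^1$ varies with $i$ and a priori need not sum to at most one; the inclusion $P \subseteq \pv(\bm{m}_{I_i})$ is precisely what tames this non-uniformity by pushing every denominator up to at least $|P|$.
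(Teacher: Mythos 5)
Your proposal is correct and follows essentially the same route as the paper: your set $P$ is exactly the paper's $\pvi(\bm{m})$, and the inclusion $P \subseteq \pv(\bm{m}_{I_i})$ for $i \in P$ is the paper's key step for validity, while the DSIC and efficiency arguments (locality of $g^1_j$ in $\bm{m}_{I_j}$ plus symmetry of impartiality, and $\pv(\bm{m}_{I_i}) = N$ under ICI and truth-telling) match the paper's reasoning.
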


\begin{proof}
To  check validity, let
\begin{equation*}
q_i(\bm{m}) = \begin{cases}
              \frac{1}{|\pvi(\bm{m})|}, & \textrm{if $i \in \pvi(\bm{m})$},\\
              0, & \textrm{otherwise}.
              \end{cases}
\end{equation*}
Given a message profile $\bm{m}$, $\pvi(\bm{m})$ is the set of agents who receive positive votes from all their respective impartial agents. The quantity $q_i(\bm{m})$ equals $0$ if $i$ receives a negative vote from at least one of her impartial agents and the reciprocal of the number of agents who receive positive vote from all their respective impartial agents, otherwise.
To show validity, we will show that for any message profile, $\bm{m}$, and any agent $i \in V$, $g^1_i(\bm{m}) \leq q_i(\bm{m})$.

If $g^1_i(\bm{m}) = 0$ then the inequality is obviously satisfied. Suppose that $g^1_i(\bm{m}) > 0$. Then $i \in \pvi(\bm{m}_{I_i})$, that is $i$ receives a positive vote from all agents in $I_i$. Hence $i \in \pvi(\bm{m})$. Moreover, for any $j \in \pvi(\bm{m})$, $j$ receives a positive vote from all agents in $I_j$. Therefore $j$ receives positive vote from all agents in $I_i \cap I_j$ and, consequently,
$j \in \pvi(\bm{m}_{I_i})$. 

Thus $\pvi(\bm{m}) \subseteq \pvi(\bm{m}_{I_i})$ and $1/|\pvi(\bm{m})| \geq 1/|\pvi(\bm{m}_{I_j})|$.
Hence $g^1_i(\bm{m}) \leq q_i(\bm{m})$.

Using this,
\begin{equation*}
\sum_{i \in V} g^1_i(\bm{m}) \leq \sum_{i \in V} q_i(\bm{m}) = 1,
\end{equation*}
which implies validity of $g^1$.

DSIC follows because, for any agent $j$, $g^1_j(\bm{m})$ depends on $\bm{m}_{I_j}$ only. Since $j\notin I_j$, $g^1_j(\bm{m})$ is independent of the report of $j$. Moreover, for any agent $i$ that is not impartial to $j$, $\bm{m}_{I_i}$ is independent of the report of $j$. Hence, for any $i \notin I_j$, $g^1_i(\bm{m})$ is independent of the report of $j$.  Hence, truthful reporting is a weakly dominant strategy for $j$. 

For efficiency, suppose that $\bm{m}$ is a truthful message profile. The Intersection Condition I ensures that for all  $i \in V$, $I_i\neq \varnothing$.  So if $i$ is not needy then at least one agent reports this under $\bm{m}_{I_i}$ and so $i \notin \pvi(\bm{m})_{I_i}$. On the other hand, if $i$ is needy then reports in $\bm{m}_{I_i}$ all contain a positive vote on $i$ and so $i \in \pvi(\bm{m}_{I_i})$. Moreover,  any needy agent $j\ne i$  receives a positive vote from all agents in $I_i\cap I_j$. Similarly, if $j $ is not needy,  then she receives a negative vote from some $ k \in  I_i \cap I_j $.  

Hence, for any $\{i,j\} \subseteq V$, $j \in \pvi(\bm{m}_{I_i})$ if and only if $j$ is needy. Thus it follows that, for any agent $i \in V$, $g^1_i(\bm{m}) = 0$ if $i$ is not needy and $g^1_i = 1/z$, where $z$ is the number of all needy agents, if $i$ is needy.
\end{proof}
\begin{remark}
To see the role played by the Intersection Condition I, consider the following example. Let $V=\{1,2,3\}$, $I_1 = \{2\}, I_2=\{1\}$, and $N=\{1\}$. The Intersection Condition is not satisfied because for instance $I_1 \cap I_2$ and $I_1\cap I_3 $  are  empty.  At  the truthful report $\bm{m}$, $g_2(\bm{m}) = 0$ since $1 \in I_2$ and gives $2$ a negative vote. Consider $3$ who has no impartials. So, trivially $3$ gets a positive vote from all her impartials. For similar reasons, for $i=1,2$, $i$  also gets positive votes from all agents in $I_i\cap I_3$ for $i=1,2$.  So, $g^1_3(\bm{m})=1/3$. It also follows that $g^1_1(\bm{m}) =1/3$.  Of course, efficiency required $g^1_1(\bm{m}) =1$.

\end{remark}

\subsection{The planner knows the sets $E_i$ for all agents $i \in V$}

\medskip

Suppose now that the planner knows the enemy network, that is the sets $E_i$ for all agents $i$.  We will construct another DSIC mechanism that will use reports of enemies, and that will,  under an Intersection Condition which is similar to the earlier condition, be also efficient.  At this stage, it is worth emphasizing the similarities and differences between a DSIC mechanism  based on messages from impartials and one based on reports from enemies.  A common implication of DSIC mechanisms is that no agent $i$ can unilaterally change her own probability of being selected.   As exhibited by $g^1$, this  is essentially the only condition that is imposed by DSIC if the mechanism uses only messages from agents  on the neediness status 
of their impartials.  However, a  DSIC mechanism that uses messages from enemies must satisfy an additional   requirement -- no agent $i$ should be able to gain by declaring an enemy to be not needy if she is actually needy. Since agent $i$ would like to declare that enemy $j$ is not needy, the mechanism must ensure that the \textit{total probability} of agent $i$'s enemies being  selected does not depend upon agent $i$'s report. 

The mechanism we construct possesses this feature because it uses a  distinguished  agent as a ``sink''.   In particular, there must be a special agent $k$ such that the enemies of $k$, the set $E_k$, can act as a set of \textit{selectors}.  The set of selectors must have the property that for any pair of agents $i, j \in V\setminus \{k\}$, there is an agent $l  \in E_k$ who is a common enemy of $i, j$. Since both $i$ and $j$ are enemies of $l$, agent $l$ has no incentive to discriminate between the pair. Moreover, $l$ also does  not gain by declaring that $i$ or $j$ is not a needy agent if they actually are needy precisely because $k$ acts as a sink and $l$ is also an enemy of $k$. Notice that since this new intersection condition (that $l \in E_i\cap E_j$) must also apply to pairs of agents in $E_k$, $E_k$ must contain at least three agents.

We describe formal  details below.  

Given a message profile $\bm{m}$, let $\pve(\bm{m}_X)$ be the set of agents, $j\in V$, receiving positive votes from all their \emph{enemies} in $X$, i.e. from all agents in $X\cap E_j$.

For any  distinguished agent $k \in V$,  the  mechanism $g^{2,k}$ is  defined as follows. 

The probability of agent $i\in V\setminus\{k\}$ being selected under message profile $\bm{m}$ is
\begin{equation*}
g^{2,k}_i(\bm{m}) = \begin{cases}

                \frac{1}{|\pve(\bm{m}_{E_i \cap E_k})\setminus \{k\}|}, & \textrm{if $i \in \pve(\bm{m}_{E_k})$,}\\
                0, & \textrm{otherwise.}
                \end{cases}
\end{equation*}
The probability of agent $k$ being selected is
\begin{equation*}
g^{2,k}_k(\bm{m}) = 1 - \sum_{i \in V\setminus \{k\}} g^{2,k}_i(\bm{m}).
\end{equation*}
So, for each agent $i \in V\setminus\{k\}$ the mechanism checks whether the agent receives positive votes from all her enemies who are also enemies of $k$, i.e. from all agents in $E_i\cap E_k$, and if so, it assigns $i$ the probability equal to $1$ over the number of all the agents, $j \in V\setminus \{k\}$ who received positive votes from all their enemies who are also enemies of $i$ and $k$, i.e. all the agents in $E_j \cap E_i \cap E_k$. If $i$ receives a negative vote from at least one of her enemies who is also an enemy of $k$, then the probability assigned to $i$ is $0$. Agent $k$ is selected whenever no other agent is selected, i.e. $k$ is selected with probability equal to $1$ minus the probability of selecting one of the agents in $V\setminus \{k\}$.

\medskip
\textbf{Intersection Condition E(k)}: For all $i, j \in V\setminus\{k\}$, $E_i \cap E_j \cap E_k \neq \varnothing$. 

\medskip

The following result  states validity, efficiency and DSIC of $g^{2,k}$ when the Intersection Condition E(k)  is satisfied.

\begin{theorem}
\label{pr:g2}
For any $k \in V$ mechanism $g^{2,k}$ is valid and DSIC. 
Moreover, if the Intersection Condition E(k)  is satisfied, then $g^{2,k}$ is efficient.
\end{theorem}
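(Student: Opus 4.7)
The plan is to mirror the three-part argument used for Theorem 1 (validity, DSIC, efficiency) while exploiting the ``sink'' role of the distinguished agent $k$ wherever subtleties arise. For validity, I only need $\sum_{i \neq k} g^{2,k}_i(\bm{m}) \le 1$ so that $g^{2,k}_k(\bm{m}) \ge 0$. Following the template used for $g^1$, I would define $q_i(\bm{m}) = 1/|\pve(\bm{m}_{E_k}) \setminus \{k\}|$ for $i \in \pve(\bm{m}_{E_k}) \setminus \{k\}$ and $0$ otherwise, observe that $\sum_{i \neq k} q_i(\bm{m}) \le 1$, and verify $g^{2,k}_i(\bm{m}) \le q_i(\bm{m})$ pointwise. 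This reduces to the inclusion $\pve(\bm{m}_{E_k}) \setminus \{k\} \subseteq \pve(\bm{m}_{E_i \cap E_k}) \setminus \{k\}$, which is routine because restricting the voter set from $E_k$ to the smaller $E_i \cap E_k$ only makes the ``all enemies vote yes'' test easier to pass.

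The DSIC step contains the real novelty, and I expect it to be the main obstacle. First, $g^{2,k}_j(\bm{m})$ depends only on votes cast by agents in $E_j \cap E_k$ together with the membership test for $j$ in $\pve(\bm{m}_{E_k})$; since $j \notin E_j$, neither computation sees $j$'s own report, so $j$ cannot shift her own probability. For the second tier of the preference, the key observation is that for any $i \neq k$ the formula $g^{2,k}_i(\bm{m})$ only uses votes cast by agents in $E_i \cap E_k$, so if $j \notin E_i$ --- equivalently, $i \in F_j \cup I_j \cup \{j\}$ --- then $j$'s report cannot affect $g^{2,k}_i$. Combined with $\sum_i g^{2,k}_i(\bm{m}) = 1$, this forces the total probability assigned to $E_j$ (which contains $k$ precisely when $j \in E_k$) to be invariant as well, and lexicographic preferences then make truth-telling weakly dominant. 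Intuitively, routing the residual probability through $k$ --- who by Intersection Condition E(k) is an enemy of every selector in $E_k$ --- blocks the natural manipulation ``declare a needy enemy to be non-needy.''

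For efficiency, fix a truthful profile $\bm{m}$ and assume Intersection Condition E(k). Setting $i = j$ in the condition gives $E_i \cap E_k \neq \varnothing$ for every $i \neq k$, so under truthful reports $i \in \pve(\bm{m}_{E_k})$ holds iff every member of $E_i \cap E_k$ declares $i$ needy, iff $i \in N$. The same reasoning, applied via the full three-way intersection $E_i \cap E_j \cap E_k$, yields $\pve(\bm{m}_{E_i \cap E_k}) \setminus \{k\} = N \setminus \{k\}$ whenever $i \neq k$ is needy. Hence each needy $i \neq k$ receives probability $1/|N \setminus \{k\}|$, each non-needy $i \neq k$ receives zero, and $k$ absorbs the residual. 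A short case split on whether $N \setminus \{k\}$ is empty (which forces $N = \{k\}$ given $N \neq \varnothing$) then delivers $\sum_{i \in N} g^{2,k}_i(\bm{m}) = 1$ in both cases.
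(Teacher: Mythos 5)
Your proposal is correct and follows essentially the same route as the paper's own proof: validity via the dominating quantities $q_i$ and the inclusion $\pve(\bm{m}_{E_k}) \subseteq \pve(\bm{m}_{E_i\cap E_k})$, DSIC via the observation that each $g^{2,k}_i$ depends only on votes from $E_i \cap E_k$ combined with $\sum_j g^{2,k}_j \equiv 1$ and the sink $k$ absorbing the residual inside $E_j$, and efficiency via the two- and three-way intersection conditions with the final case split on $N\setminus\{k\}$. No substantive differences to report.
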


\begin{proof}
The proof of validity is very similar to the proof of validity of $g^1$. 

For an agent $i \in V\setminus \{k\}$, let
\begin{equation*}
q_i(\bm{m}_{E_k}) = \begin{cases}
              \frac{1}{|\pve(\bm{m}_{E_k})\setminus \{k\}|}, & \textrm{if $i \in \pve(\bm{m}_{E_k})$},\\
              0, & \textrm{otherwise}.
              \end{cases}
\end{equation*}
Given a message profile $\bm{m}$, $\pve(\bm{m}_{E_k})$ is the set of agents who receive positive votes from all their enemies who are also enemies of $k$. The quantity $q_i(\bm{m}_{E_k})$ gets value $0$ if $i$ receives a negative vote from at least one of her enemies who is an enemy of $k$ and the reciprocal of  the number of agents in $V\setminus \{k\}$ who receive positive vote from all their respective enemies other than $k$, otherwise.

To show validity, we will show that for any message profile, $\bm{m}$, and any agent $i \in V\setminus \{k\}$, $g^{2,k}_i(\bm{m}) \leq q_i(\bm{m}_{E_k})$.
If $g^{2,k}_i(\bm{m}) = 0$ then the inequality is satisfied. Suppose that $g^{2,k}_i(\bm{m}) > 0$. Then $i \in \pve(\bm{m}_{E_k})$. Moreover, for any $j \in \pve(\bm{m}_{E_k})$, $j$ receives a positive vote from all agents in $E_j\cap E_k$ and, consequently, $j$ receives a positive vote from all agents in $E_i\cap E_j\cap E_k$. Hence $j \in \pve(\bm{m}_{E_i \cap E_k})$. This shows that $\pve(\bm{m}_{E_k}) \subseteq \pve(\bm{m}_{E_i\cap E_k})$ and $1/|\pve(\bm{m}_{E_k})| \geq 1/|\pve(\bm{m}_{E_i\cap E_k})|$. Hence $g^{2,k}_i(\bm{m}) \leq q_i(\bm{m}_{E_k})$.
Since
\begin{equation*}
\sum_{i \in V\setminus\{k\}} g^{2,k}_i(\bm{m}) \leq \sum_{i \in V\setminus\{k\}} q_i(\bm{m}_{E_k})= 1
\end{equation*}
and
\begin{equation*}
g^{2,k}_k(\bm{m}) = 1 - \sum_{i \in V\setminus\{k\}} g^{2,k}_i(\bm{m}) \geq 0,
\end{equation*}
we get  validity of $g^{2,k}$.

For DSIC notice first that, for any agent $j \in V$, $g^{2,k}_j$ depends on messages of agents in $E_j \cap E_k$ only.  So, $j$ cannot affect her own probability of being chosen.

Hence it is enough to show that truthful report is a dominant strategy for any agent $i \in E_k$. 
Take any agent $i \in E_k$. 
Fix a message profile $\bm{m}$. Since $i \notin E_i$ and $g^{2,k}_i$ depends on  $\bm{m}_{E_i \cap E_k}$ only,  $g^{2,k}_i$ is independent of $m_i$. 
Next, for any agent $j \in F_i$, $j \notin E_i$, and so $g^{2,k}_j$ is independent of $m_i$. Therefore,
\begin{equation}\label{eq1}
\sum_{j \in F_i} g^{2,k}_j(m'_i,\bm{m}_{-i}) = \sum_{j \in F_i} g^{2,k}_j(\bm{m}).
\end{equation}
Moreover,  the definition of $g^{2,k}$ makes it clear that 
\[\sum_{j\in V} g^{2,k}_j (\bm{m}) \equiv 1, \mbox{ for all } \bm{m} \in \bm{\Theta} \] 
Hence, using equation \eqref{eq1},
\[ \sum_{j \in E_i} g_j^{2,k} (m'_i,\bm{m}_{-i}) = \sum_{j \in E_i} g_j^{2,k}(\bm{m})  \forall m'_i.\]

Hence, no agent $i$ gains by deviating from truthful reporting.

Suppose now that the Intersection Condition E(k) is satisfied.
To check  efficiency, suppose that $\bm{m}$ is a truthful message profile.  From the Intersection Condition E,  it follows that if $i$ is not needy then at least one agent reports that under $\bm{m}_{E_i\cap E_k}$ and so $i \notin \pve(\bm{m}_{E_k})$. On the other hand, if $i$ is needy then reports in $\bm{m}_{E_i \cap E_k}$ all contain a positive vote on $i$, and so $i \in \pve(\bm{m}_{E_i\cap E_k})$. Moreover, for any needy agent $j \in V\setminus \{k\}$, $j$ receives a positive vote from all agents in $E_j\cap E_i\cap E_k$.  Conversely,   any non-needy agent $j\in V\setminus \{k\}$  receives a negative vote from at least one agent in $ E_j \cap E_i \cap E_k$. Hence, for any $\{i,j\} \subseteq V\setminus \{k\}$, $j \in \pve(\bm{m}_{E_i \cap E_k})$ if and only if $j$ is needy. Thus if follows that, for any agent $i \in V\setminus \{k\}$, $g^{2,k}_i(\bm{m}) = 0$ if $i$ is not needy and $g^{2,k}_i(\bm{m}) = 1/z$, where $z$ is the number of all needy agents in $V\setminus \{k\}$, if $i$ is needy. If there is no needy agent in $V\setminus \{k\}$ then $\sum_{i \in V\setminus \{k\}} g^{2,k}_i(\bm{m}) = 0$ and, consequently, 
$g^{2,k}_k(\bm{m}) = 1$.  In this case, either $k$ is needy and the support of $g^{2,k}(\bm{m})$ is contained in $N$. Or $k$ is not needy, implying that $N=\varnothing$ and so $g^{2,k}$ is efficient. 
\end{proof}

\medskip

\subsection{The planner knows the sets $F_i$ for all agents $i \in V$}

\medskip

Suppose instead that the planner knows the sets $F_i$ of each agent.  It is not difficult to see that a DSIC mechanism using messages only of friends analogous to $g^{2,k}$ can be defined with a pre-specified agent $k$ acting as a sink.

Given a message profile $\bm{m}$, let $\pvf(\bm{m}_X)$ be the set of agents, $j\in V$, receiving positive votes from all their \emph{friends} in $X$, i.e. from all agents in $X\cap F_j$. 
Let  mechanism $g^{3,k}$  be defined as follows. 
The probability of agent $i\in V\setminus\{k\}$ being selected under message profile $\bm{m}$ is
\begin{equation*}
g^{3,k}_i(\bm{m}) = \begin{cases}
                \frac{1}{|\pvf(\bm{m}_{F_i \cap F_j})\setminus \{k\}|}, & \textrm{if $i \in \pvf(\bm{m}_{F_k})$,}\\
                0, & \textrm{otherwise.}
                \end{cases}
\end{equation*}
The probability of agent $k$ being selected is
\begin{equation*}
g^{3,k}_k(\bm{m}) = 1 - \sum_{i \in V\setminus \{k\}} g^{3,k}_i(\bm{m}).
\end{equation*}

\medskip

\textbf{Intersection Condition F(k)}: For all $i, j \in V\setminus\{k\}$, $F_i \cap F_j \cap F_k \neq \varnothing$.

\medskip

\begin{theorem}
\label{pr:g3}
For any $k \in V$ mechanism $g^{3,k}$ is valid and DSIC. 
Moreover, if the Intersection Condition F(k) is satisfied, then $g^{3,k}$ is efficient.
\end{theorem}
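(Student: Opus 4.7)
The plan is to mirror the three-part structure of the proof of Theorem \ref{pr:g2} (validity, DSIC, efficiency), replacing the enemy network with the friend network throughout and reading the denominator $|\pvf(\bm{m}_{F_i\cap F_j})\setminus\{k\}|$ as $|\pvf(\bm{m}_{F_i\cap F_k})\setminus\{k\}|$ (by analogy with $g^{2,k}$). The designated agent $k$ again plays the role of a residual sink, which guarantees that $\sum_{i \in V} g^{3,k}_i(\bm{m}) \equiv 1$. This identity will drive DSIC, while Intersection Condition F(k) will drive efficiency.

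For validity, I would mimic the bookkeeping argument for $g^{2,k}$. Define
\begin{equation*}
q_i(\bm{m}_{F_k}) = \begin{cases} 1/|\pvf(\bm{m}_{F_k})\setminus\{k\}|, & i \in \pvf(\bm{m}_{F_k}),\\ 0, & \text{otherwise,}\end{cases}
\end{equation*}
for $i \in V\setminus\{k\}$, so that $\sum_{i \in V\setminus\{k\}} q_i(\bm{m}_{F_k}) \le 1$. The key containment is $\pvf(\bm{m}_{F_k}) \subseteq \pvf(\bm{m}_{F_i \cap F_k})$: if $j \in \pvf(\bm{m}_{F_k})$ then $j$ gets positive votes from every agent in $F_j \cap F_k$, hence \emph{a fortiori} from every agent in $F_i \cap F_j \cap F_k$, putting $j$ in $\pvf(\bm{m}_{F_i \cap F_k})$. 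This yields $g^{3,k}_i(\bm{m}) \le q_i(\bm{m}_{F_k})$, so $\sum_{i \neq k} g^{3,k}_i(\bm{m}) \le 1$ and $g^{3,k}_k(\bm{m}) \ge 0$.

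For DSIC, I would observe that $g^{3,k}_i(\bm{m})$ (for $i \ne k$) depends only on messages indexed by $F_i \cap F_k$. Since $j \notin F_j$, agent $j$'s own probability $g^{3,k}_j$ is independent of $m_j$. Moreover, for any $i \in E_j \cup I_j$ (non-friends of $j$, other than $k$), we have $j \notin F_i$ so $m_j$ does not enter $\bm{m}_{F_i\cap F_k}$, and $g^{3,k}_i$ is independent of $m_j$. Combining this with the identity $\sum_{i \in V} g^{3,k}_i \equiv 1$, the sum $\sum_{i \in F_j} g^{3,k}_i$ (which equals $1$ minus the sum of unchanged non-friend probabilities) is also independent of $m_j$. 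Because lexicographic preferences depend on friends' and enemies' probabilities only through their aggregate sums, any deviation $m'_j$ leaves both the top-level and second-level terms of $j$'s preference unchanged, so truthful reporting is weakly dominant.

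For efficiency, assume the Intersection Condition F(k) and let $\bm{m}$ be truthful. For any $i \in V\setminus\{k\}$, Condition F(k) gives $F_i \cap F_j \cap F_k \ne \varnothing$ for each $j \in V\setminus\{k\}$, so the set $F_i\cap F_k$ is nonempty; hence a non-needy $i$ receives at least one negative vote from $F_i \cap F_k$ and drops out of $\pvf(\bm{m}_{F_k})$, while a needy $i$ receives only positive votes and so $i \in \pvf(\bm{m}_{F_i \cap F_k})$. The same reasoning shows $\pvf(\bm{m}_{F_i \cap F_k})\setminus\{k\}$ coincides with the set of needy agents in $V\setminus\{k\}$. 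Consequently $g^{3,k}_i(\bm{m}) = 1/z$ for every needy $i \in V\setminus\{k\}$ (with $z$ their count) and $0$ for non-needy ones; if $z = 0$ then $g^{3,k}_k(\bm{m}) = 1$, and either $k$ is needy (support lies in $N$) or $N = \varnothing$ (efficiency is vacuous). The one place requiring minor care --- and the only plausible obstacle --- is verifying that the aggregate of friends' probabilities is preserved under deviation when $k \in F_j$, but this follows directly from the residual definition of $g^{3,k}_k$ as in Theorem \ref{pr:g2}.
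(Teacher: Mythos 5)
Your proposal is correct and is exactly the argument the paper intends: the paper omits this proof as ``virtually identical'' to that of Theorem~\ref{pr:g2}, and you have carried out precisely that translation (validity via the containment $\pvf(\bm{m}_{F_k}) \subseteq \pvf(\bm{m}_{F_i\cap F_k})$, DSIC via the residual-sink identity $\sum_i g^{3,k}_i \equiv 1$, efficiency via Intersection Condition F(k)), including the correct reading of the typo $F_i\cap F_j$ as $F_i\cap F_k$. The only point worth tightening is the DSIC step when $k \notin F_j$: there $g^{3,k}_k$ is a non-friend residual not covered by your ``unchanged non-friend'' claim, but in that case $j \notin F_k$ so $m_j$ never enters any $\bm{m}_{F_i\cap F_k}$ and the whole outcome is unchanged.
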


The proof is omitted since it is virtually identical to that of Theorem~\ref{pr:g2}.

\medskip

 \section{Unknown Relationships Networks}
 
 In the last section, we showed that the planner can construct efficient and DSIC mechanisms for classes of networks satisfying appropriate sufficient conditions provided the planner knows the networks.  Notice that the construction of these mechanisms exploits the finer details of the networks.  For instance, when the network satisfies the Intersection Condition I, the mechanism used in the proof of Theorem~\ref{pr:g1} depends only on the network of impartials. This raises the question of whether the planner can construct efficient DSIC mechanisms  when she does not know  the networks of impartials, enemies and friends.   Notice that in this case, the planner has to construct \textit{ one } mechanism that is DSIC and efficient for \mbox{ all } networks. In the theorem below, we show that this cannot  be done.  
 Since we want to focus on the role played by incomplete information about the network,  the proof below will only use networks that satisfy the Intersection Condition I. This will highlight the fact that the impossibility result is precipitated by the lack of information about the bilateral relationships.

\begin{theorem}
\label{th4}  
Let $|V| =n \ge 4$. If the planner does not know the sets of impartials, enemies and friends, then there is no mechanism that is both DSIC and efficient.
\end{theorem}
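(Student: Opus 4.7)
The plan is to argue by contradiction. Suppose that a mechanism $g$ is both DSIC and efficient in the setting where the planner does not observe $F_i, E_i, I_i$. I would fix $n = 4$ and a two-element needy set, say $N = \{3,4\}$, so that efficiency at any truthful profile pins down $g_3(\bm{m}) + g_4(\bm{m}) = 1$ with $g_1 = g_2 = 0$, leaving only the split $(g_3, g_4)$ as a scalar parameter of the outcome. I would then consider the family of states that satisfy the Intersection Condition I in which at most one pair of agents has a friend or enemy relation; these states are the richest ones available for $n=4$ under IC-I, since one can check that any network with two independent edges already violates the condition.

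Having fixed this family, I would extract DSIC constraints from single-agent deviations linking pairs of truthful profiles that differ in a single agent's relationship claim (together with the symmetrically forced change in the reported friend/enemy of the partner agent). Because the lexicographic tiebreaker $\sum_{j \in F_i} w_f(p_j-p_j') - \sum_{j \in E_i} w_e(p_j - p_j') \geq 0$ has to hold for \emph{every} choice of $w_f, w_e > 0$, it separates into \emph{one} independent weak inequality per friend coordinate and one per enemy coordinate; applied forwards and backwards at the two endpoints of each single-agent transition, these force coordinate-wise equalities at the intermediate (inconsistent) profiles. Chaining such equalities along a carefully chosen path that connects a truthful profile in which agent $1$ is a friend of $3$ to one in which agent $1$ is a friend of $4$, via the all-impartial baseline, should yield a sequence of comparisons between the scalar split $g_3$ across the states that, combined with the efficiency constraint at the endpoints and the validity constraint $\sum_i g_i \leq 1$ at the intermediate profiles, produces an incompatibility. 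The extension from $n=4$ to $n \geq 4$ is obtained by adjoining impartial agents who are impartial to everyone; this preserves IC-I and does not affect the argument.

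The main obstacle will be to show that the mechanism cannot ``escape'' the chain by redistributing probability to non-needy agents at intermediate (inconsistent) profiles: at each link one must use that the deviating agent's own probability is controlled (from above and below) by the DSIC inequality on her own coordinate, and that validity forces the total probability on the remaining coordinates to be bounded above by $1$ minus this own probability. Making this accounting tight enough to produce the required contradiction, rather than a family of consistent inequalities, is where the real work lies, and it is why the argument requires $n \geq 4$ together with the ability to pit a relational constraint of one agent against the efficiency requirement on two needy agents.
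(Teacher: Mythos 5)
There is a fatal structural flaw in the choice of states: every state in your family has the \emph{same} needy set $N=\{3,4\}$, and on any such family the constant mechanism $g(\bm{m})=(0,0,\tfrac12,\tfrac12)$ for all $\bm{m}$ is valid, DSIC (it is constant, so no deviation changes the outcome vector), and efficient (all probability always sits on $N$). Hence no system of DSIC, efficiency and validity constraints extracted from that family can be mutually incompatible, and the contradiction you hope to reach at the end of the chain cannot exist. This is already visible in the chain itself: when agent $1$ switches from ``friend of $3$'' to ``all impartial'', the forward DSIC constraint (evaluated with the friend type) gives $g_3(\bm{m}^A)\ge g_3(P)$ at the hybrid profile $P$, but the backward constraint evaluated with the all-impartial type is vacuous at the second lexicographic level (the sums over $F_1$ and $E_1$ are empty), so you do not get the equalities you assert --- only one-directional inequalities, which chain up to the jointly satisfiable conclusion $g_3(\bm{m}^A)\ge g_3(\bm{m}^C)\ge g_3(\bm{m}^B)$. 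A smaller inaccuracy: because DSIC must hold for every $w_f,w_e>0$, what is controlled is the \emph{sum} of probabilities over $F_i$ and the sum over $E_i$, not one inequality per coordinate; this coincides with your statement only because your networks have a single edge.

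The paper's proof shows what is actually needed: two states whose needy sets are \emph{different singletons} ($N=\{1\}$ versus $N=\{n\}$) and whose relationship networks differ, so that efficiency forces probability $1$ onto two different agents at the two truthful profiles and no constant mechanism survives. The enemy clique $S=\{1,\dots,n-2\}$ in the first state then traps that unit of probability: each agent of $S$ who unilaterally switches her report cannot change her own probability nor the total probability on her enemies, so mass $1$ remains on $S$ along the whole deviation chain down to a common hybrid profile; the symmetric chain started from the other state traps mass $1$ on $V\setminus S$ at the same profile, contradicting validity. To salvage your approach you would need at least two states with distinct needy sets, and relationship structures (enemy or friend cliques containing the needy agents) rich enough to propagate the efficiency constraint, as a conserved total probability on a fixed set, along every step of the chain.
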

\medskip
\begin{proof}

Let $V=\{1, 2,\ldots, n\}$ with  $n \ge 4$.

 Consider two states of the world, $\bm{\theta}^1$ and $\bm{\theta}^n$. In $\bm{\theta}^ 1$,   only agent $1$ is needy, while in $\bm{\theta}^n$, only agent $n$ is needy. The bilateral relationships in the two states are as follows.
\begin{enumerate}[(i)]
\item In state of the world $\bm{\theta}^1$,   all pairs of agents $i, j \in S\equiv  \{1,2, \ldots, n-2\}$ are enemies while all  remaining pairs of agents are impartial.
\item In state of the world  $\bm{\theta}^n$,  agents $n-1$ and $n$ are enemies,  and all the remaining pairs of agents are impartial. \footnote{
Notice that  the networks in both states  satisfy the intersection condition for impartials when $n\ge 4$.} 
\end{enumerate}

Figure~\ref{fig:noeffdsic} presents the two networks for the case of $n = 4$ agents.  The figure on the left shows the state $\bm{\theta}^1$, with the arc between $1$ and $2$ representing that they are enemies and $1$ is the needy agent, while the figure on the right represents state $\bm{\theta}^n$.

\begin{figure}[h]
\begin{center}
\includegraphics[scale=.75]{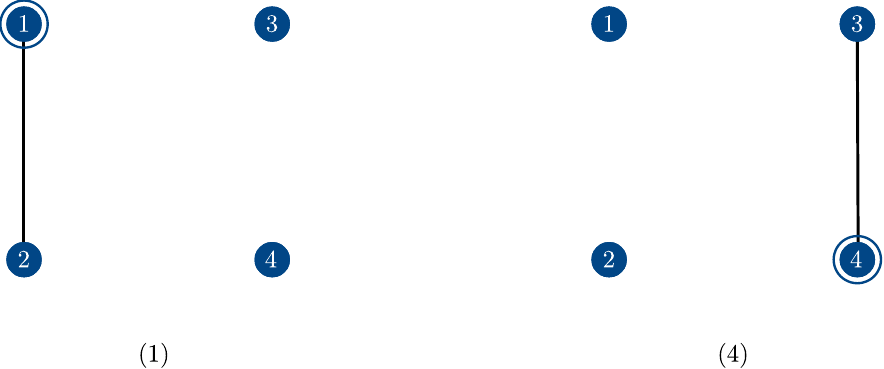}
\end{center}
\caption{Two states of the world with $n = 4$ agents. The encircled nodes correspond to the needy agents.}
\label{fig:noeffdsic}
\end{figure}

Let $\bm{m}^i$ be the truthful message profile at state of the world  $\bm{\theta}^i$, $i \in \{1,n\}$. Then, efficiency  requires that 
\begin{eqnarray}
g_1(\bm{m}^1) &=& 1 \label{th4:eq1}\\
g_n(\bm{m}^n) &=& 1 \label{th4:eq2}.
\end{eqnarray}

From ~\eqref{th4:eq1} and DSIC, 
\begin{equation}\label{th4:eq3} g_1(m^n_1,\bm{m}_{-1}^1) = 1\end{equation}
 For suppose  $g_1(m^n_1,\bm{m}_{-1}^1) < 1$.  Then, agent $1$ would strictly prefer to report $m^1_1$ instead of $m^n_1$  when  the other agents report $\bm{m}_{-1}^1$. This would contradict the supposition that $m^n_1$ is a dominant strategy for $1$ when the true state of the world is $\bm{\theta}^n$. 
 
Next, consider agent $2$. Since $m^1_2$ is a dominant strategy in state of the world $\bm{\theta}^1$, 
a unilateral deviation to $m^n_2$ cannot change the sum of the probabilities with which  enemies of $2$ are selected. So,  ~\eqref{th4:eq3} implies that
\[\sum_{j \in S\setminus\{2\}}g_j(m^n_1, m^n_2, m^1_{-\{1,2\}}) =1\]

 Similarly, since $g^1_3$ is  a dominant strategy in state $\bm{\theta}^1$, agent 3 cannot change her own probability or that of her enemies through a unilateral deviation. Hence, 
 \[\sum_{j \in S}g_j(m^n_1,m^n_2,m^n_3,\bm{m}_{-\{1,2,3\}}^1) = 1 \]
By repeated use of the same argument. 
\begin{equation}
\label{th4:eq4}
\sum_{j \in S}g_j(m^n_1,m^n_2,\ldots, m^n_{n-2}, m^1_{n-1},m^1_n) = 1.
\end{equation}

But, notice that we could have proceeded in the \lq\lq reverse direction" starting from equation  ~\eqref{th4:eq2}, first considering the deviation of agent $n$ from $m^n_n$ to $m^1_n$ and then that of agent $2$ from $m^n_{n-1} $ to $m^1_{n-1}$. By analogous reasons, we would get
\[\sum_{j \notin S}g_j(m^n_1,m^n_2,\ldots, m^n_{n-2}, m^1_{n-1},m^1_n) = 1.\]
This contradicts ~\eqref{th4:eq4}.

This completes the proof of the theorem.
\end{proof}

\medskip

In view of Theorem~\ref{th4}, the search must be for \lq\lq second-best" mechanisms.  Of course, the constant mechanism where each agent is picked with equal probability is DSIC. The constant mechanism picks a needy agent with probability $q$. 

Can one improve on the constant mechanism?  We  take
 a step in this direction by constructing two  DSIC mechanisms  and comparing their relative efficiency.   The mechanisms we construct are adaptations of the two canonical DSIC mechanisms -- the \textit{random dictatorship} and \textit{duples} mechanisms -- in the literature on probabilistic social choice starting from Gibbard~\cite{gibbard1977manipulation}.

  We point out the differences between our framework and that of Gibbard~\cite{gibbard1977manipulation} before we formally describe the two mechanisms.  

In our framework, voters and alternatives coincide, and it is commonly known that  every agent $i$'s most preferred outcome is $i$ herself. Hence, the unmodified random dictatorship where  each agent is picked at random and that  agent's most preferred outcome is chosen is nothing but the constant mechanism itself.   Moreover, there is some correlation in individual preference since the bilateral relationships of friendship, impartiality and enmity are symmetric.\footnote{In principle, this allows the designer to detect lies in case of discordant reports. For instance, if $i$ reports that $j$ is a friend but $j$ does not confirm this, then one of $i$ or $j$ is lying. This does not necessarily help the designer to construct DSIC mechanisms since there could be \textit{coordinated} lies.} Perhaps the most important difference between the two frameworks is that preference orderings are not strict in our framework unlike  the Gibbardian framework. Finally, our entirely different  notion of efficiency  is based on the external characteristic of neediness and is quite independent of individual preferences. 

We now define the two DSIC mechanisms.  Both are direct mechanisms and so each agent reports a type $m_i= ( F_i, I_i, E_i, N)$ or simply one of the two sets of relationships since the latter follows as a residue.   
 Given a message profile $\bm{m}$ and an agent $j$, let $E_j(\bm{m})$, $F_j(\bm{m})$, and $I_j(\bm{m})$ be sets of enemies, friends, and impartial  agents of agent $j$,  and $N_j(\bm{m})$ be set of needy agents reported by $j$. 
 
Formally,  let  $g^{\mathbf{R}}$ be the random dictatorship mechanism defined as follows

\begin{equation*}
g^{\mathrm{R}}_{i}(\bm{m}) = \frac{1}{n} \sum_{j \in V\setminus \{i\}} g^{\mathrm{R}}_{i \mid j} (\bm{m}), 
\end{equation*}
where \begin{equation*}
g^{\mathrm{R}}_{i\mid j}(\bm{m}) = \begin{cases}
                \frac{1}{|F_j(\bm{m}) \cap {N}_j(\bm{m})|}, & \textrm{if $i \in F_j(\bm{m}) \cap {N}_j(\bm{m})$,}\\
                \frac{1}{|F_j(\bm{m})|}, & \textrm{if $i \in F_j(\bm{m})$ and $F_j(\bm{m}) \cap {N}_j(\bm{m}) = \varnothing$,}\\
                \frac{1}{|I_j(\bm{m}) \cap {N}_j(\bm{m})|}, & \textrm{if $i \in I_j(\bm{m}) \cap {N}_j(\bm{m})$ and $F_j(\bm{m}) = \varnothing$,}\\
                \frac{1}{|I_j(\bm{m})|}, & \textrm{if $i \in I_j(\bm{m})$ and $F_j(\bm{m}) \cup (I_j(\bm{m}) \cap {N}_j(\bm{m})) = \varnothing$,}\\
                \frac{1}{|E_j(\bm{m}) \cap {N}_j(\bm{m})|}, & \textrm{if $i \in E_j(\bm{m}) \cap {N}_j(\bm{m})$ and $F_j(\bm{m}) \cup I_j(\bm{m}) = \varnothing$,}\\
                \frac{1}{|E_j(\bm{m})|}, & \textrm{if $i \in E_j(\bm{m})$ and $F_j(\bm{m}) \cup I_j(\bm{m}) \cup (E_j(\bm{m}) \cap {N}_j(\bm{m})) = \varnothing$,}\\
                0, & \textrm{otherwise.}
                \end{cases}
\end{equation*}

As is standard, the  random mechanism picks  each agent with probability $1/n$,\footnote{Of course,  agents can be picked according to any  fixed probability distribution  that is  independent of the message profile. The equiprobability specification is more equitable.}  and allowing that  agent to specify her set of \textit{second best} outcomes.    So, if $i$ has a needy friend, then she gets to pick such  friends with equal probability. If no friend is needy, then she gets to choose her friends with equal probability even if they are not needy. If $i$ has no friends, then she chooses   needy impartial agents if such  agents exist. If not, then she chooses   impartial agents with equal probability. Finally, if $i$ has only enemies, then she chooses the  needy enemies, and so on. 
Notice that the sequence in which $i$ is allowed to choose matters.  For instance, if she is asked to choose initially from her set of enemies, then she will declare a friend to be an enemy and choose her.

Next, we define the duples mechanism.  For any $m_i$, define  a hierarchy of sets $H(m_i) =\{ S^1(m_i), S^2(m_i), S^3(m_i), S^4(m_i), S^5(m_i), S^6(m_i)\}$ such that 
\begin{itemize}
\item []$S^1(m_i) = F_i \cap N$
\item[] $S^2(m_i)  = F_i \setminus N$
\item[] $S^3(m_i)  = I_i \cap  N$
\item[] $S^4(m_i) = I_i \setminus N$
\item []$S^5(m_i) = E_i \cap N$
\item [] $S^6(m_i) = E_i \setminus N.$
\end{itemize} 
For any pair of agents, $j,k \in V\setminus\{i\}$.
\[ \mbox{ Agent $i$ votes for $j$ against $k$ if } j\in S^a(m_i), k \in S^b(m_i), a<b\]
Agent $i$ abstains from voting over the pair $\{j,k\}$  if $j$ and $k$ both belong to $S^a(m_i) $ for some $a$.

Let $x_{jk}(\bm{m}) $ be the number of votes cast for $j$, and $x_{kj}(\bm{m}) $ be the number of votes cast for $k$ in the vote over the pair$\{j,k\}$ at report profile $\bm{m}$.  Then,

\[
g_j^D({\{j,k\}, \bm{m}) =\begin{cases} 1, & \mbox{ if } x_{jk}(\bm{m}) > x_{kj}(\bm{m}),\\
1/2, &\mbox{ if } x_{jk}(\bm{m}) = x_{kj}(\bm{m}),\\
0, & \mbox{ if } x_{jk}(\bm{m}) < x_{kj}(\bm{m}),
\end{cases}}
\]
And,
\[g_k^D(\{j,k\},\bm{m}) =1- g_j^D(\{j,k\}, (\bm{m}).\]
Then, the duples mechanism, $g^D$ is 
\[g_i^D(\bm{m}) = \frac{2}{n(n-1)} \sum_{j \in V\setminus\{i\}} g_i^D(\{i,j\}, \bm{m}). \]

The following proposition is stated without proof.
\begin{proposition}
Both $g^R $ and $g^D$ are valid and DSIC mechanisms.
\end{proposition}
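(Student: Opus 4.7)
The plan is to establish validity and DSIC for $g^{\mathrm{R}}$ and $g^{\mathrm{D}}$ separately, treating the two lexicographic coordinates of each agent's preference in turn. For validity of $g^{\mathrm{R}}$ I would swap the order of summation,
\[
\sum_{i \in V} g^{\mathrm{R}}_i(\bm{m}) = \frac{1}{n}\sum_{j \in V} \sum_{i \in V \setminus \{j\}} g^{\mathrm{R}}_{i \mid j}(\bm{m}),
\]
and observe that the cascade defining $g^{\mathrm{R}}_{\cdot \mid j}$ always places exactly one unit of probability on $V \setminus \{j\}$ (any one of $F_j(\bm{m})$, $I_j(\bm{m})$, $E_j(\bm{m})$ being nonempty suffices, which holds for $n \geq 2$), so the total equals $\tfrac{1}{n}\cdot n = 1$. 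For $g^{\mathrm{D}}$, regrouping the double sum by unordered pair and using $g^{\mathrm{D}}_i(\{i,k\},\bm{m}) + g^{\mathrm{D}}_k(\{i,k\},\bm{m}) = 1$ yields $\tfrac{2}{n(n-1)}\binom{n}{2} = 1$.

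For DSIC of $g^{\mathrm{R}}$, the first-coordinate check is immediate, since for each $k \neq j$ the term $g^{\mathrm{R}}_{j \mid k}$ is a function of $m_k$ only and hence $g^{\mathrm{R}}_j$ does not depend on $m_j$. For the second coordinate, the only $m_j$-sensitive piece of $\sum_{i \in F_j} w_f g^{\mathrm{R}}_i - \sum_{i \in E_j} w_e g^{\mathrm{R}}_i$ is $\tfrac{1}{n}\bigl(\sum_{i \in F_j} w_f\, g^{\mathrm{R}}_{i \mid j}(\bm{m}) - \sum_{i \in E_j} w_e\, g^{\mathrm{R}}_{i \mid j}(\bm{m})\bigr)$, and under truthful reporting the cascade concentrates its unit mass on the best available true preference class of $j$: on $F_j$ whenever $F_j \neq \varnothing$, on $I_j$ if $F_j = \varnothing \neq I_j$, and on $E_j$ only as a last resort. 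I would then argue that any deviation either keeps the mass in the same true class, demotes it to a strictly worse true class (for instance by emptying the reported friend set while the true friend set is nonempty), or reroutes part of it to a true enemy that has been reclassified as a reported friend or impartial; each of these weakly reduces the expression. Lies confined to $N_j(\bm{m})$ merely permute mass within a single true class and are neutral.

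For DSIC of $g^{\mathrm{D}}$, the first coordinate follows from the fact that $j$ does not vote in any pair $\{j,k\}$, so $g^{\mathrm{D}}_j(\{j,k\},\bm{m})$ depends only on $\bm{m}_{V \setminus \{j,k\}}$. For the second coordinate I would analyse each pair $\{i,k\} \subseteq V \setminus \{j\}$ in isolation: when $i$ and $k$ lie in the same true preference class of $j$, the weighted contribution equals the common weight times $g^{\mathrm{D}}_i(\{i,k\},\bm{m}) + g^{\mathrm{D}}_k(\{i,k\},\bm{m}) = 1$ and is vote-insensitive; when the classes differ, $j$ strictly prefers the higher-weighted member to win, and the truthful hierarchy places friends in $S^1 \cup S^2$, impartials in $S^3 \cup S^4$, and enemies in $S^5 \cup S^6$, so the preferred member always sits in the strictly lower-indexed set and receives $j$'s truthful vote. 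The step I expect to require the most care, though it is more bookkeeping than insight, is the second-coordinate argument for $g^{\mathrm{R}}$: one must rule out clever joint deviations that simultaneously change type labels and needy statuses, since these could in principle shift both the active class of the cascade and its membership. The unifying observation is that $g^{\mathrm{R}}_{\cdot \mid j}$ always deposits its unit of mass on exactly one reported class, and truth is the unique report guaranteeing that this class equals $j$'s best available true preference class.
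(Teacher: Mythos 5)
The paper states this proposition without proof, so there is nothing to compare against; your argument supplies the missing details and is correct. The two validity computations are exactly right (the cascade $g^{\mathrm{R}}_{\cdot\mid j}$ always deposits total mass one on $V\setminus\{j\}$ because the reported sets $F_j(\bm{m}), I_j(\bm{m}), E_j(\bm{m})$ partition $V\setminus\{j\}$, and the duples sum telescopes over unordered pairs), and both DSIC arguments rest on the correct decomposition: only the term $g^{\mathrm{R}}_{\cdot\mid j}$ (respectively, $j$'s votes on pairs not containing $j$) depends on $m_j$, so the first lexicographic coordinate is automatic and the second reduces to maximising $w_f\mu(F_j)-w_e\mu(E_j)$ over the possible placements of a unit mass (respectively, to maximising each pairwise contribution separately, using that $g^{\mathrm{D}}_i(\{i,k\},\bm{m})$ is monotone in $j$'s vote and that the two-term contribution is constant when $i$ and $k$ lie in the same true class). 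Two cosmetic points: truth is not the \emph{unique} optimal report (lies about needy statuses within the active class, or any lie when $F_j=I_j=\varnothing$, are also optimal), which is harmless since only weak dominance is claimed; and in the $g^{\mathrm{R}}$ argument the cleanest phrasing is simply that the truthful report attains the global upper bound $w_f$ (or $0$, or $-w_e$) of the objective over \emph{all} unit-mass placements on $V\setminus\{j\}$, which disposes of the ``clever joint deviations'' you worry about without any case analysis.
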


We now evaluate the relative efficiency of the random dictator and duple mechanisms.

Fix the networks of friends, enemies and impartials and denote the overall network by $G$.
Consider any agent $j$  in the network with $f >0$ friends. Suppose  $j$ is the random dictator. Then,  the probability that none of her friends is needy is $(1-q)^f$. So, when $f>0$, agent $j$ selects a needy individual with probability $1- (1-q)^f$. When agent $j$ has no friends, she selects a needy agent with probability $1-(1-q)^i$ when $i>0$ is the number of her impartials.  Finally, if agent $j$ has only enemies, she selects a needy agent with probability $1-(1-q)^{n-1}$.  

So, letting $m^G_{f,i}$  be the number of nodes with $f$ friends and $i$ impartials in the given network $G$,   the probability of choosing a needy agent in $G$ by the random dictator mechanism is:
\begin{align}\label{rd}
P_{RD}(G) = & \frac{1}{n} \Bigg( m^{G}_{0,0} \left(1 - (1-q)^{n-1}\right) + {} \nonumber\\
            & \sum_{i = 1}^{n-1} m^{G}_{0,i} \left(1 - (1 - q)^{i}\right) + {} \nonumber\\
            & \sum_{f = 1}^{n-1} \sum_{i = 0}^{n - f - 1} m^{G}_{f,i} \left(1 - (1-q)^f\right)\Bigg).
\end{align}
Hence the probability of choosing a needy agent by the random dictator mechanism, on any network $G$, is greater or equal to $q$.  The following proposition describes the obvious implications of Equation~\eqref{rd}.  

\begin{proposition}\label{prop:rd}
The random dictatorship mechanism 
\begin{itemize}
\item[(i)] coincides with the constant mechanism  iff every agent either has exactly one friend or no friend and exactly one impartial.
\item[(ii)] in all other cases, the probability of a needy agent being selected is equal to $1 - (1-q)^{m}$ where $m \geq 2$.
\item[(iii)]  the highest probability of choosing a needy agent is
\begin{equation*}
P_{RD}(G) = 1-(1-q)^{n-1}
\end{equation*}
and it is attained on a graph where $\max(|F_i|, |I_i|, |E_i|) =n-1$ for all agents.
\end{itemize}
\end{proposition}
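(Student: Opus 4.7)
The plan is to rewrite the right-hand side of \eqref{rd} as a clean per-agent average that makes all three parts transparent. For each $j \in V$ define the \emph{effective exponent}
\[
d_j = \begin{cases} |F_j|, & \text{if } |F_j| \geq 1,\\ |I_j|, & \text{if } |F_j| = 0 \text{ and } |I_j| \geq 1,\\ n-1, & \text{if } |F_j| = |I_j| = 0,\end{cases}
\]
so that $d_j$ is exactly the exponent of $(1-q)$ appearing in $j$'s contribution to the three sums of \eqref{rd}. Grouping the counts $m^{G}_{f,i}$ according to the $(|F_j|,|I_j|)$ profile of each agent immediately yields the identity
\[
P_{RD}(G) \;=\; \frac{1}{n}\sum_{j \in V}\bigl(1 - (1-q)^{d_j}\bigr) \;=\; 1 - \frac{1}{n}\sum_{j \in V}(1-q)^{d_j}.
\]

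With this identity in hand, all three parts follow from elementary monotonicity in the exponent. For part (i), since $0 < 1-q < 1$ the map $k \mapsto (1-q)^{k}$ is strictly decreasing on positive integers, so each summand satisfies $1 - (1-q)^{d_j} \geq q$ with equality iff $d_j = 1$. Hence $P_{RD}(G) = q$ iff $d_j = 1$ for every $j$; unpacking the definition of $d_j$ and using $n \geq 3$ to exclude the case $|F_j| = |I_j| = 0$ (which would force $d_j = n-1 \geq 2$) recovers exactly the stated characterization. For part (ii), if the condition of (i) fails then some agent $j_{0}$ has $d_{j_{0}} \geq 2$, so the corresponding summand contributes a term $1 - (1-q)^{m}$ with $m = d_{j_{0}} \geq 2$, which is the sense of the claim. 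For part (iii), since $d_j \leq n-1$ for every $j$, each summand is at most $1 - (1-q)^{n-1}$ and hence $P_{RD}(G) \leq 1 - (1-q)^{n-1}$; equality requires $d_j = n-1$ for every $j$, and the case split in the definition of $d_j$ shows this is in turn equivalent to exactly one of $|F_j|,|I_j|,|E_j|$ being equal to $n-1$ for every $j$, i.e.\ $\max(|F_j|,|I_j|,|E_j|)=n-1$.

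The only point requiring any real care, and the one I would verify first, is the identification of the effective exponent $d_j$ in the two edge cases (agents with no friends, and agents with neither friends nor impartials), so that the three sums in \eqref{rd} consolidate into the single averaged form above. Once that is done, every claim in the proposition follows by inspection, and I expect no other obstacle.
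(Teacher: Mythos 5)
Your proof is correct and takes essentially the approach the paper intends: the paper offers no proof at all, describing the proposition as ``obvious implications of Equation~\eqref{rd},'' and your per-agent rewriting $P_{RD}(G)=\tfrac{1}{n}\sum_{j}\bigl(1-(1-q)^{d_j}\bigr)$ together with monotonicity of $k\mapsto(1-q)^k$ is exactly the formalization of that reading. Your explicit handling of the edge cases (the $|F_j|=|I_j|=0$ agents, the need for $n\geq 3$ in part (i), and the loose phrasing of part (ii) as referring to the per-dictator terms) is if anything more careful than the paper.
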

We now consider the duples mechanism. Since the duples mechanism is the average of mechanisms each of whom has a  range of two alternatives,   the \textit{maximum} probability with which  the duples mechanism chooses a needy agent is $1-(1-q)^2$.  So, on networks where all agents have two or more friends, or no friends but two or more impartials, or where the enemy network is a complete graph on $V$,  the random dictatorship will be relatively more efficient than the duples mechanism.

It is also interesting to compare the two mechanisms in terms of a worst-case scenario. From Proposition~\ref{prop:rd}, we know that the lowest probability with which  a needy agent is picked by the random dictatorship  equals $q$. We now show that 
if there are only two types of links between the agents (e.g. only friends and impartial agents, or only enemies and impartial agents), then  the lowest probability with which the duples mechanism chooses a needy agent is strictly greater than $q$. 

\begin{proposition}
\label{prop:duples}
On any network $G$ such that either $\bigcup_{i\in V} E_i = \varnothing$, or $\bigcup_{i\in V} F_i = \varnothing$, or $\bigcup_{i\in V} I_i = \varnothing$, the duples mechanism chooses a needy agent with probability strictly greater than $q$.
\end{proposition}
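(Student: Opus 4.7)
My plan is to write
\[ P_D(G)=\frac{2}{n(n-1)}\sum_{\{i,j\}} C(\{i,j\}), \]
where
\[ C(\{i,j\})=\Ex\bigl[g_i^D(\{i,j\},\bm{m})\mathbf{1}_{i\in N}+g_j^D(\{i,j\},\bm{m})\mathbf{1}_{j\in N}\bigr] \]
is the expected contribution of the duple $\{i,j\}$ to ``a needy agent is selected'', with the expectation taken over the independent random neediness of each agent. The goal is to establish $C(\{i,j\})\geq q$ for every pair and $C(\{i,j\})>q$ for at least one pair.

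Under any of the three hypotheses of the proposition only two types of bilateral relation occur, so on each pair $\{i,j\}$ every voter $k\neq i,j$ is either \emph{same-type} (same relation to both $i$ and $j$) or \emph{cross-type}. Cross-type voters cast a fixed vote for whichever of $i,j$ has the higher-priority type regardless of neediness; same-type voters vote for whoever of $i,j$ is needy and abstain if $i,j$ have identical neediness. Writing $s_{ij}$ for the number of same-type voters and $\Delta_{ij}$ for the net cross-type bias toward $i$, a four-way case analysis on the neediness of $(i,j)$ gives $C(\{i,j\})=q+q(1-q)(Z_{ij}-1)$, where $Z_{ij}=2$ if $s_{ij}>|\Delta_{ij}|$, $Z_{ij}=3/2$ if $s_{ij}=|\Delta_{ij}|>0$, and $Z_{ij}=1$ otherwise. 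Thus $C(\{i,j\})\geq q$ throughout, and the proof reduces to exhibiting one pair with $s_{ij}\geq\max(1,|\Delta_{ij}|)$.

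The key simplification is the identity $\Delta_{ij}=|R_i|-|R_j|$, where $R$ denotes the preferred relation in each case: $R=F$ when $\bigcup_{i} E_i=\varnothing$ or when $\bigcup_{i} I_i=\varnothing$, and $R=I$ when $\bigcup_{i} F_i=\varnothing$. This holds because the boundary indicators $\mathbf{1}[j\in R_i]$ and $\mathbf{1}[i\in R_j]$ cancel by symmetry of $R$ in the sum $\sum_{k\neq i,j}(\mathbf{1}[k\in R_i]-\mathbf{1}[k\in R_j])$. Applying the classical pigeonhole observation that the $R$-degree sequence on $n$ agents cannot simultaneously attain both $0$ and $n-1$ yields two agents with a common $R$-degree, hence a pair with $\Delta_{ij}=0$, so the condition reduces to producing a same-degree pair with $s_{ij}\geq 1$.

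The final and main obstacle is ruling out the degenerate situation in which every equal-$R$-degree pair has $s_{ij}=0$. If $s_{ij}=0$ and $|R_i|=|R_j|=d$, then $R_i\setminus\{j\}$ and $R_j\setminus\{i\}$ must partition $V\setminus\{i,j\}$, which forces $d\in\{n/2,(n-2)/2\}$ (so $n$ is even). Using the relational symmetry $x\in R_y\iff y\in R_x$, three agents sharing such a problematic degree produce mutually inconsistent pairwise partition constraints, so each problematic degree class has at most two members. The resulting surplus of agents with non-problematic, pairwise distinct degrees then permits one to pick a mixed-degree pair $\{i,k\}$ whose $R$-degrees differ by a small amount and to use the forced partition structure of $R_i$ to count enough common same-type neighbours to certify $s_{ik}\geq|\Delta_{ik}|$. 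This rigidity-based case analysis is the technical heart of the argument.
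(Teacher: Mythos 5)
Your reduction of the proposition to the pairwise quantities $s_{ij}$ (same-type voters) and $\Delta_{ij}$ (net cross-type bias), the formula $C(\{i,j\})=q+q(1-q)(Z_{ij}-1)$, and the identity $\Delta_{ij}=|R_i|-|R_j|$ are all correct, and the last of these is a genuine structural simplification relative to the paper. The paper works directly with the sets $F_i\setminus(F_j\cup\{j\})$, negates the good-pair condition for \emph{all} pairs, and deduces that at least $n-2$ agents have degree exactly $(n-2)/2$ with pairwise disjoint friend sets, which is then shown impossible; you instead find an equal-degree pair by pigeonhole (using that degrees $0$ and $n-1$ cannot coexist) and then analyse when such a pair can fail. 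Your characterisation of the failing equal-degree pairs ($n$ even, common degree in $\{(n-2)/2,\,n/2\}$, with $R_i$ and $R_j$ partitioning the remaining voters) and the claim that each such ``problematic'' degree class has at most two members are both correct and verifiable.

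The gap is in the last step. Your counting of the ``surplus of agents with non-problematic, pairwise distinct degrees'' does not close the argument: with two problematic classes of size two each and $n-4$ singleton classes drawn from the $n-3$ remaining admissible degree values, one only gets $n\leq n+1$, which is vacuous, so for even $n$ a bad configuration is not excluded by counting alone. You acknowledge that a further ``rigidity-based case analysis'' on mixed-degree pairs is needed and call it the technical heart of the argument, but you do not supply it, and as described (``use the forced partition structure of $R_i$ to count enough common same-type neighbours'') it is a plan rather than a proof. The step can in fact be closed cleanly: for any pair $\{i,j\}$ with $|R_i|=(n-2)/2$ and $|R_j|=n/2$ one computes $s_{ij}=2(c+e)-1$, where $c$ is the number of common $R$-neighbours in $V\setminus\{i,j\}$ and $e=\mathbf{1}[j\in R_i]$; since $s_{ij}$ is a nonnegative integer of odd parity, $s_{ij}\geq 1=|\Delta_{ij}|>0$, so any such pair is good. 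Hence either both problematic degrees occur and you are done, or at most one occurs, in which case the admissible degree values number at most $n-2$ and the agent count is at most $n-1<n$, a contradiction. Without this (or an equivalent) argument spelled out, the proof is incomplete precisely where you say the main difficulty lies.
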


\begin{proof}
Take any network $G$ such that either $\bigcup_{i\in V} E_i = \varnothing$, or $\bigcup_{i\in V} F_i = \varnothing$, or $\bigcup_{i\in V} I_i =\varnothing$.

Notice that in any such network there are only two types of links and for any agent $i$ there is a strict preference between  the two types of neighbours connected by these links (e.g. there are only friends and impartial agents and friends are strictly preferred to impartial agents). Without loss of generality, suppose that $\bigcup_{i\in V} E_i = \varnothing$.

Take any networks $G$ with $n \geq 3$ agents, and  any two distinct agents, $i$ and $j$.  Notice that, under the duples mechanism, the remaining agents would vote as follows:
\begin{itemize}
\item Agents in $F_i \cap F_j$ and $I_i \cap I_j$ vote for a needy agent if one exists.
\item Agents in $F_i\setminus (F_j \cup \{j\})$ vote for $i$, regardless of the neediness of the agents.
\item Agents in $F_j\setminus (F_i \cup \{i\})$ vote for $j$, regardless of the neediness of the agents.
\end{itemize}
These three cases  cover all the agents in $V\setminus \{i,j\}$.
Given the observation above, the probability of choosing a needy agent in graph $G$ from among agents $i$ and $j$ is strictly greater than $q$ if
\begin{equation}
\label{eq:duples:1}
|F_i\setminus (F_j \cup \{j\})| \leq \frac{n-2}{2} \textrm{ and } |F_j\setminus (F_i \cup \{i\})| \leq \frac{n-2}{2},
\end{equation}
with at least one of the inequalities being strict.

To see that this is the case, notice that what matters is which of the agents is chosen when exactly one of them is needy. Suppose that only one of the agents is needy.
If both the inequalities in~\eqref{eq:duples:1} are strict then the mechanism chooses the needy agent. Thus the probability of choosing the needy agent is equal to $1 - (1-q)^2 = q + q(1-q)$ in this case.

If exactly one of the inequalities in~\eqref{eq:duples:1} is strict, say $|F_j\setminus (F_i \cup \{i\})| < \frac{n-2}{2}$, then the mechanism chooses the needy agent if only $i$ is needy.  Suppose only $j$ is needy. Since $ |F_i\setminus (F_j \cup \{j\})| =\frac{n-2}{2}$ and all other agents support $j$, $i$ and $j$ tie and so the needy agent is chosen with probability $1/2$.
 So, the overall probability of choosing a needy agent is $q + q(1-q)/2$ in this case.

If condition~\eqref{eq:duples:1} is not satisfied then, in the case where exactly one of the agents $i$ or $j$ is needy, the mechanism either always chooses the same agent, regardless of the neediness statuses or randomizes uniformly between the two agents. In either case the probability of choosing a needy agent is  equal to $q$.

We complete the proof of the proposition by   showing that in any network $G$ with at least $3$ agents there exists a pair of agents $i$ and $j$ which satisfies condition~\eqref{eq:duples:1}. It is easy to verify that this is true for $n \in \{3,4\}$ agents. Suppose that $n \geq 5$ and assume, to the contrary,
that such a pair of agents does not exist. Then, for any two agents $\{i,j\} \subseteq V$, $i \neq j$,
\begin{equation}
\label{eq:duples:2}
\textrm{either } |F_i\setminus (F_j\cup\{j\})| \geq \frac{n-2}{2} \textrm{ or } |F_j\setminus (F_i\cup\{i\}| \geq \frac{n-2}{2}.
\end{equation}
This implies that there is at most one agent $i$ such that $|F_i| < (n-2)/2$.
This also implies that there is at most one agent $i$ such that $|F_i| > (n-2)/2$. For if there were two agents $i$ and $j$, $i \neq j$, such that $|F_i| > (n-2)/2$ and $|F_j| > (n-2)/2$ then 
\begin{equation*}
|F_i\setminus (F_j\cup\{j\})| \leq n - 1 - |F_j\cup\{j\}| = n - 2 - |F_j| < (n-2)/2
\end{equation*}
and, similarly, $|F_j\setminus F_i| < (n-2)/2$, which contradicts~\eqref{eq:duples:2}.
Hence there are at least $n-2 \geq 3$ agents $i$ such that $|F_i| = (n-2)/2$. This is impossible when $n$ is odd and so we have a contradiction in this case.
Suppose that $n$ is even (in which case $n-2 \geq 4$). Take any two agents, $i$ and $j$, $i\neq j$, such that $|F_i|=|F_j| = (n-2)/2$. This, together with condition~\eqref{eq:duples:2}, implies that $F_i\cap F_j  = \varnothing$. Thus for any two agents $i$ and $j$ such that $|F_i|=|F_j| = (n-2)/2$ it holds that $F_i\cap F_j  = \varnothing$. Since $n -2 \geq 4$ so there exist two agents $k \notin\{i,j\}$ and $l \notin\{i,j\}$ with $|F_k| = |F_l| = (n-2)/2$ and such that either $\{k,l\} \subseteq F_i$ or $\{k,l\} \subseteq F_j$. But then either $i \in F_k\cap F_l$ or $j \in F_k\cap F_l$. On the other hand, by what we established above, $F_k \cap F_l = \varnothing$.  This contradiction completes the proof.
\end{proof}

\section{Structural Balance}

In this section, we restrict attention to the class of networks satisfying 
\textit{structural balance theory}, a concept borrowed from social network analysis which  asserts that social networks tend towards balanced or stable relationships.     Following Heider  \cite{heider1946attitudes},  Cartwright and Harary \cite{cartwright1956structural} formulated the hypothesis within a  graph-theoretic framework.  Using the term \lq\lq friend" to designate a positive sentiment   and the term \lq\lq enemy" to designate a negative relationship, the classic balance model defines a  social  network as balanced if it contains no violations of three assumptions:
 \begin{itemize}
\item If $j \in F_i$ and $k \in F_j$ then $k \in F_i$ (\emph{a friend of a friend is a friend}).
\item If $j \in E_i$ and $k \in E_j$ then $k \in F_i$ (\emph{an enemy of an enemy is a friend}).
\item If $j \in E_i$ and $k \in F_j$ then $k \in E_i$ (\emph{a friend of an enemy is an enemy}).
\end{itemize}
Notice that the third condition, together with symmetry of enmity and friendship relations implies that if $j \in F_i$ and $k \in E_j$ then $k \in E_i$ (\emph{an enemy of a friend is an enemy}).

\subsection{Planner knows the network}

Suppose the planner knows the network. 
We  are then able to fully characterize the networks (within the class of networks satisfying structural balance) which admit an efficient DSIC mechanism.
Before we state the result, we need to introduce the following notions.

An \emph{EF-component} in network $G$ is a maximal set of nodes, $X\subseteq V$, such that for any two nodes $\{i,j\} \subseteq X$, $i \neq j$, there is a sequence of nodes $\{i_0,\ldots,i_m\}$ such that $i_0 = i$, $i_m = j$, and for any $1 \leq r \leq m$, $i_{r-1} \in F_{i_{r}} \cup E_{i_{r}}$. In other words, $j$ is reachable from $i$ in $G$ by a path consisting of the friendship links or enmity links only.

An \emph{F-component} is an \emph{EF-component}  $X$ in which for all pairs of agents $ i,j \in X$, there is a path consisting of the friendship links only.  Of course, every F-component is also an EF-component.

An F-component $X\subseteq V$ of $G$ is an \emph{F-clique} if, for any distinct two nodes $\{i,j\} \subseteq X$,  we have $i \in F_j$. In other words, any two nodes in an F component are friends.

An EF-component $X\subseteq V$ of $G$ is an \emph{EF-clique} if it is either an F-clique or there exists an non-empty set $Y\subsetneq X$,  
such that $Y$ and $X\setminus Y$ are both F-cliques and, for any node $i \in Y$ and $j \in X\setminus Y$, $i \in E_j$, i.e. $i$ and $j$ are enemies.

\begin{lemma} \label{lemma:sb} Let $G$ be a network satisfying Structural Balance. Then,
\begin{itemize}
\item[(i)] Every F-component is an F-clique.
\item[(ii)] Every EF-component is either an F-clique or it consists of two F-cliques, $Y_1$ and $Y_2$, such that for all nodes $i \in Y_1$ and $j\in Y_2$, $i \in E_j$.
\end{itemize}
\end{lemma}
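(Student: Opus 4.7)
My plan is to handle (i) by an easy induction on the friendship path, and then build (ii) on top of it by analysing how structural balance propagates a single cross-enmity edge.

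For (i), fix an F-component $X$ and two distinct $i, j \in X$. By definition there is a sequence $i = i_0, i_1, \ldots, i_m = j$ with $i_{r-1} \in F_{i_r}$ for every $1 \leq r \leq m$. A straightforward induction on $r$ using ``a friend of a friend is a friend'' shows that $i_0 \in F_{i_r}$ for all $r \geq 1$, and taking $r = m$ yields $i \in F_j$. Hence every pair of nodes in $X$ are friends, so $X$ is an F-clique.

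For (ii), let $X$ be an EF-component and decompose $X$ into its F-components; by (i) each is an F-clique. Write $[i]$ for the F-clique of $X$ containing $i \in X$. The key propagation step is: whenever $[i] \neq [j]$ and $i \in E_j$, I claim that $i' \in E_{j'}$ for all $i' \in [i]$ and $j' \in [j]$. Indeed, applying ``a friend of an enemy is an enemy'' to $j \in E_i$ together with $j' \in F_j$ gives $j' \in E_i$, and applying the same axiom again to $j' \in E_i$ together with $i' \in F_i$ gives $i' \in E_{j'}$; the boundary cases $i' = i$ or $j' = j$ are immediate. Consequently, between any two distinct F-cliques in $X$, either every cross pair is an enmity pair or no enmity edge at all runs between them.

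I then rule out three or more F-cliques inside $X$. Consider the contracted graph whose vertices are the F-cliques of $X$ and whose edges are the pairs of F-cliques across which an enmity edge exists. This graph is connected, because any F- or E-path leaving an F-clique of $X$ must do so along an enmity edge. If $X$ contained three distinct F-cliques, the contracted graph would contain a length-two path $Y_a - Y_c - Y_b$; picking $i \in Y_a$, $k \in Y_c$, $j \in Y_b$ and using the propagation step, both $i \in E_k$ and $k \in E_j$, so ``an enemy of an enemy is a friend'' forces $i \in F_j$, contradicting $Y_a \neq Y_b$. Hence $X$ is either a single F-clique or is partitioned into exactly two F-cliques $Y_1, Y_2$; in the latter case, connectivity of the contracted graph yields at least one cross enmity edge, and propagation promotes it to enmity between every $i \in Y_1$ and $j \in Y_2$, giving the form required in (ii).

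I expect the main subtlety to be the propagation step — one must keep track of the symmetry of $F$ and $E$ and the boundary cases $i' = i$, $j' = j$ — together with the observation that an EF-path between distinct F-cliques must use an enmity edge to leave the initial F-clique, which is what turns connectedness of $X$ into connectedness of the contracted graph and makes the ``length-two path'' argument go through.
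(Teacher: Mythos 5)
Your proof is correct and follows essentially the same route as the paper's: decompose the EF-component into F-cliques via (i), note that no friendship edge can run between two distinct F-cliques, propagate a single cross enmity edge to all cross pairs using ``a friend of an enemy is an enemy'', and rule out three or more F-cliques with ``an enemy of an enemy is a friend''. Your contracted-graph treatment of the connectivity step is in fact slightly more careful than the paper's, which asserts a direct enmity edge between \emph{every} pair of F-components from connectedness of the EF-component alone (a claim only guaranteed for adjacent pairs in your contracted graph), though the paper's conclusion is unaffected since it then shows there are at most two components.
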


\begin{proof} 
\begin{enumerate}[(i)]
\item Suppose $X$ is an F-component.   Consider three distinct nodes $ i, j, k $ in $X$ such that 
$j \in F_i, k \in F_j$.   Structural balance requires that a friend of a friend is a friend, and so we must have $k \in F_i$. Hence, $X$ is an F-clique.
\item Take an EF-component $X$. If it is an F-component, then it is an F-clique. So, suppose $X$ is not an F-component. Then there are nonempty and mutually disjoint  sets $Y_1,\ldots,Y_m$  such that $X = \bigcup_{r = 1}^{m} Y_r$, and $Y_r$ is an F-component, for all $r \in \{i,\ldots,m\}$. Let $\mathcal{Y}= \{Y_1,\ldots,Y_m\}$ be the set of these F-components.  
\end{enumerate}

Take any pair of sets $Y , Y' \in \mathcal{Y}$. Take any $i \in Y$ and $j \in Y'$. It cannot be that there is a friendship link between $i$ and $j$. For if $i \in F_j$, then since a friend of a friend must be a friend,  and since $Y$ and $Y'$ are F-cliques, \emph{all} pairs of agents in $Y\cup Y'$ must be friends and so $Y \cup Y'$ would be a single F-clique.

Since $X$ is an EF-component, there must be $k \in Y$ and $l \in Y'$ such that $k \in E_l$.  (If no such link existed, then there could not be a path of friendship or enemy links connecting vertices in $Y$ with vertices in $Y'$. ) Using the requirement that a friend of an enemy is an enemy repeatedly,  \emph{for all} pairs of agents $i, j$ with $i \in Y, j\in Y'$, we  must have $i \in E_j$.

Lastly, suppose that $|\mathcal{Y}| \leq 2$. Assume otherwise. Then there exist $Y \in \mathcal{Y}$, $Y' \in \mathcal{Y} \setminus \{Y\}$, and $Y'' \in \mathcal{Y}  \setminus \{Y,Y'\}$ and nodes $i \in Y$, $j \in Y'$, and $k \in Y''$ such that $i \in E_j$ and $j \in E_k$. But then, by structural balance, $k \in E_i$ (an enemy of an enemy is a friend). This contradicts the fact that there are no friendship links between $Y$ and $Y''$.
\end{proof}

We can now state our characterisation result. 

\begin{theorem}  Let $n\ge 4$.  A network $G$ satisfying structural balance admits an efficient and DSIC mechanism if and only if it either has exactly one F-component or it has at least three EF-components.
\end{theorem}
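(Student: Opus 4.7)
My plan is to prove both directions by leveraging the structural classification of EF-components in Lemma~\ref{lemma:sb}.

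For sufficiency, I handle the two cases directly. If $G$ has exactly one F-component, Lemma~\ref{lemma:sb}(i) makes $V$ a single F-clique, so for any fixed $k \in V$ and any $i, j \in V\setminus\{k\}$ we have $F_i \cap F_j \cap F_k = V\setminus\{i,j,k\} \neq \varnothing$ since $n \geq 4$; Intersection Condition F$(k)$ holds, and Theorem~\ref{pr:g3} supplies an efficient, valid, DSIC mechanism $g^{3,k}$. If $G$ has at least three EF-components, then for any $i, j \in V$ a pigeonhole argument produces an EF-component $C$ containing neither $i$ nor $j$; every agent in $C$ lies in $I_i \cap I_j$ because there are no friendship or enmity edges crossing EF-components. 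Thus Intersection Condition I holds and Theorem~\ref{pr:g1} supplies $g^1$.

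For necessity, suppose $G$ has at least two F-components and at most two EF-components. I first establish an \emph{invariance lemma}: any DSIC mechanism satisfies, for every agent $i$ and any two messages $m_i, m_i'$ of hers (with $\bm{m}_{-i}$ fixed), that $g_i$, $\sum_{j\in F_i} g_j$, and $\sum_{j\in E_i} g_j$ are identical between $(m_i,\bm{m}_{-i})$ and $(m_i',\bm{m}_{-i})$. This follows because $i$'s preferences depend only on $F_i, E_i$ (not on $N$): both $m_i$ and $m_i'$ are truthful in some realized $N$, and DSIC applied in both states forces $g(m_i,\bm{m}_{-i}) \sim_i g(m_i',\bm{m}_{-i})$; separating the friend-sum and enemy-sum equalities requires DSIC to hold for arbitrary weights $w_f, w_e > 0$.

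Armed with this lemma, I split by the number of EF-components. \textbf{Case A:} $G$ is a single EF-component, which by Lemma~\ref{lemma:sb}(ii) together with the two-F-component assumption must be a union $Y_1 \cup Y_2$ of two F-cliques with all cross-pairs enemies. For $i \in Y_1$ we have $F_i = Y_1\setminus\{i\}$ and $E_i = Y_2$, so combining invariance of $g_i$ with that of the friend-sum yields $\sum_{j \in Y_1} g_j$ invariant under $i$'s report, while $\sum_{j \in Y_2} g_j = \sum_{j\in E_i} g_j$ is invariant too; the symmetric argument for $Y_2$ makes both sums globally constant across all message profiles. But efficient truthful profiles for $N = \{1\}$ with $1 \in Y_1$ force $\sum_{j\in Y_1} g_j = 1$, while those for $N = \{a\}$ with $a \in Y_2$ force $\sum_{j\in Y_1} g_j = 0$---contradiction. \textbf{Case B:} $G$ has two EF-components $C_1, C_2$. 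For any $i \in C_1$ we have $\{i\} \cup F_i \cup E_i = C_1$ (both when $C_1$ is a single F-clique and when $C_1$ is two F-cliques joined by enmity), so the invariance lemma gives $\sum_{j \in C_1} g_j$ invariant under $i$'s report; the same holds symmetrically for $C_2$. Taking truthful profiles $\bm{m}^1$ for $N=\{1\}$ (with $1\in C_1$), $\bm{m}^a$ for $N=\{a\}$ (with $a\in C_2$), and the intermediate profile $\bm{m}^*$ where $C_1$-members report $\{a\}$ and $C_2$-members report $\{1\}$: going $\bm{m}^1\to\bm{m}^*$ only perturbs $C_1$-members' reports and preserves $\sum_{j\in C_1} g_j = 1$; going $\bm{m}^*\to\bm{m}^a$ only perturbs $C_2$-members' reports and preserves $\sum_{j\in C_2} g_j = 1$. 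Hence $\sum_{j\in V} g_j(\bm{m}^*) \geq 2$, violating validity.

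The main obstacle I anticipate is the invariance lemma, since the paper's definition of DSIC fixes the preference weights $w_f, w_e$ and a priori delivers only the weighted combination $w_f \sum_{j\in F_i} g_j - w_e \sum_{j\in E_i} g_j$ as invariant rather than each summand. My resolution is to require DSIC robustly across all $w_f, w_e > 0$, which is consistent with the paper's use of separate enemy-sum invariance in the proof of Theorem~\ref{th4}. If the paper insists on fixed weights, the same contradictions can still be derived by combining the weighted invariances with the validity bound $\sum_{j\in V} g_j \leq 1$, at the cost of a more involved linear-combination argument.
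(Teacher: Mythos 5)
Your proof is correct and follows essentially the same route as the paper: sufficiency via Intersection Condition F$(k)$ (single F-clique, Theorem~\ref{pr:g3}) and Intersection Condition I (three or more EF-components, Theorem~\ref{pr:g1}), and necessity via Lemma~\ref{lemma:sb}'s classification followed by a chain of unilateral deviations between two messages that are each dominant in some state with the same underlying network, hence the same preferences; your ``meet in the middle at $\bm{m}^*$'' contradiction with validity is the same profile and the same contradiction the paper derives from one end. The one point where you are more careful than the paper is the invariance lemma: lexicographic indifference gives only invariance of $g_i$ and of the combination $w_f\sum_{j\in F_i}g_j - w_e\sum_{j\in E_i}g_j$, while the paper simply asserts that the friend-sum and enemy-sum are separately unchanged. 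Your fallback fix (weighted invariance plus validity) is the right one and is less involved than you fear: along each chain the relevant component sum is pinned at $1$, so everything outside it is $0$; then the weighted identity together with $\sum_{j\in V}g_j\le 1$ and nonnegativity forces both increments to vanish, with no need to strengthen DSIC to hold for all $w_f,w_e>0$ (which would alter the model). One small caveat: state the invariance lemma only for pairs of messages reporting the same relationship sets, since two messages reporting different networks are truthful in states with different preferences for $i$ and the two DSIC inequalities then do not combine into an indifference; this is all your argument uses anyway.
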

\begin{proof}
Suppose $G$ consists of exactly one F-component. Then, $V$ is an F-clique containing at least 4 nodes. Then, the Intersection Condition F(k) is satisfied for any choice of the ``sink'' $k\in V$.

 If it has at least three EF-components then any two nodes have a common impartial node (in the component they do not belong to) and so the Intersection Condition I is satisfied.

Suppose $G$ consists of (i) one EF-component that is not an F-component, or (ii)  $G$ consists of  exactly two EF-components.

In view of Lemma 1, these give rise to the following possibilities
\begin{itemize}
\item[(a)] $V$ consists of two $F$-cliques $X_1$ and $X_2$, with $E_i= X_2 $ for each $i \in X_1$. This happens if there is one EF-component.
\item[(b)] $V$ consists of F-cliques $X_1, X_2, Y_1, Y_2$\footnote{ Some of these are  possibly empty. However, at least one $X$-set and at least one of the $Y$-sets must be nonempty.} with $E_i =X_2$ for each $i\in X_1$, $E_i = Y_2$ for each $i \in Y_1$, and $I_i= Y_1\cup Y_2$ for each $i \in X_1\cup X_2$.  This happens if there are 2 EF components.
\end{itemize}
The proof in the case of (a) above is very similar to that of Theorem 4 and is avoided.  

 We prove that there is no efficient DSIC mechanism in case (b) above. We assume that all the sets $X_1, X_2, Y_1, Y_2$ are non-empty. The proof can be easily adapted if one of the $X$-sets and one of the $Y$-sets are empty. 
 
  Notice first  
 that none of the Intersection Conditions are satisfied. 
 
Assume to the contrary that $g$ is an efficient DSIC mechanism.

 Suppose that $X_1 = \{ 1,\ldots, l\}$, $X_2= \{l+1, \ldots, s\}$,  $Y_1= \{s+1, \ldots, k\}$,  $Y_2 = \{k+1,\ldots, n\}$. Let there be two states $\bm{\theta}^1$ and $\bm{\theta}^2$ such that $N^1=\{1\} $ and $N^2=\{n\}$, where $N^1$ and $N^2$ are the sets of needy agents in states $\bm{\theta}^1$ and $\bm{\theta}^2$ respectively. 
 Then, DSIC and efficiency imply that  there are truthful dominant strategy message profiles $\bm{\bar{m}}^1$ and 
$\bm{\bar{m}}^2$  such that $g_1(\bm{\bar{m}}^1) =1$ and $g_n(\bm{\bar{m}}^2) =1$.

Consider a sequence of message profiles $\{\bm{m}^r\}_{r=0}^n$ such that $\bm{m}^0= \bm{\bar{m}}^1$, and  for $r =1, \ldots, n$
\begin{itemize}
\item $m^r_r =\bar{m}^2_r$
\item $m^r_k = m^{r-1}_k \mbox { for all } k\neq r.$
\end{itemize}
That is, at stage $r$, only agent $r$ switches from reporting message $\bar{m}^1_r$ to reporting $\bar{m}^2_r$, while all others report what they reported in stage $r-1$. Notice that $\bm{m}^n=\bm{\bar{m}}^2$.

Consider agent 1. Since $\bar{m}^2_1$ is a dominant strategy in state $\bm{\theta}^2$, a unilateral deviation to $\bar{m}^1_1$ cannot affect $1$'s own probability given any report profile of others. Hence,
\begin{equation}\label{th5:eq1} g_1(\bar{m}^2_1, \bar{m}^1_{-1})=1\end{equation}

Take any $r \in X_1$.   Along the sequence  $\{\bm{m}^r\}_{r=0}^n$, $m^r_r$ is a unilateral deviation of $r$ from her dominant strategy $\bar{m}^1_r$ in state $\bm{\theta}^1$. So, she cannot affect either her own probability of being chosen or that of the sum of her friends' probabilities of being chosen. Repeated use of this argument as well as ~\eqref{th5:eq1} establishes that
\begin{equation}\label{th5:eq2}
\sum_{i \in X_1} g_i(\bar{m}^2_1,\dots,  \bar{m}^2_l, \bar{m}^1_{-X^1}) =1\end{equation}

Similarly, no agent $i$  in $X_2$ can  by a unilateral deviation from $\bar{m}^1_i$ to $\bar{m}^2_i$
affect the sum of the probabilities of their enemies being chosen, since $\bar{m}^1_i$ is a dominant strategy in state $\bm{\theta}^1$. So, starting from agent $l+1$(the first agent in $X_2$), and applying this principle repeatedly,  ~\eqref{th5:eq2} gives us 
\[\sum_{i \in X_1} g_i(\bar{m}^2_{X}, \bar{m}^1_{-X}) =1\]
That is,
\[\sum_{i \in Y} g_i(\bar{m}^2_{X}, \bar{m}^1_{-X}) =0\]

But, clearly, we could have started from $g_n(\bm{\bar{m}}^2) =1$, and defined a sequence of changes in order to arrive at 
\[\sum_{i \in Y_2} g_i(\bar{m}^2_{X}, \bar{m}^1_{-X}) =1\]
This contradiction establishes the theorem.
\end{proof}

\subsection{The planner does not know the network}

If we restrict attention to networks satisfying structural balance, the duples mechanism strictly dominates the constant mechanism since it chooses a needy agent with probability greater than  $q$.  On the other hand, there will be networks satisfying structural balance where the 
random dictator chooses a needy agent with probability $q$.  For instance, structural balance is satisfied  if $n$ is even and each agent has exactly one friend and no enemies.  

\begin{proposition}
\label{proposition :duples:sbalance}
On any network $G$ with $n \geq 3$ nodes that satisfies the structural balance property, the duples mechanism chooses a needy agent with probability greater or equal to
\begin{equation*}
q + q(1 - q)\frac{3n - 8}{8(n - 1)}.
\end{equation*}

\end{proposition}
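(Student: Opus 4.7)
The plan is to express the duples mechanism's probability of selecting a needy agent as an average over unordered pairs, reduce each pair's contribution to a vote-counting problem, and close with an extremal argument driven by the SB decomposition of Lemma~\ref{lemma:sb}. First, conditioning on the needy statuses of $i$ and $j$ gives $P_{ij} = q^2 + q(1-q)(p_{ij}+p_{ji})$, where $p_{ij}$ is the probability that $i$ wins the duple vote on $\{i,j\}$ given that $i$ is needy and $j$ is not. Since the duples mechanism selects a needy agent with probability $\frac{1}{\binom{n}{2}}\sum_{\{i,j\}}P_{ij}$, the claim reduces to the purely combinatorial inequality
\[
\sum_{\{i,j\}}\bigl(p_{ij}+p_{ji}-1\bigr) \;\geq\; \frac{n(3n-8)}{16}.
\]

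Second, for each pair $\{i,j\}$ I would classify the remaining $n-2$ agents via the hierarchy $H$ of the duples definition. Call $\ell$ a \emph{partisan} of $i$ (resp.\ of $j$) when $\ell$ prefers $i$ to $j$ (resp.\ $j$ to $i$) irrespective of neediness; this is exactly the case when $\ell$'s bilateral relation to $i$ strictly dominates, in the order friend $>$ impartial $>$ enemy, her relation to $j$. The remaining agents are \emph{neutral}: they vote for the unique needy agent in $\{i,j\}$ when exactly one is needy, and abstain otherwise. Write the resulting counts as $a_{ij},b_{ij},t_{ij}$ with $a_{ij}+b_{ij}+t_{ij}=n-2$. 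A short case analysis on the vote totals yields
\[
\delta_{ij} := p_{ij}+p_{ji}-1 \;=\; \begin{cases} 1,&\text{if }t_{ij}>|a_{ij}-b_{ij}|,\\[2pt] 1/2,&\text{if }t_{ij}=|a_{ij}-b_{ij}|>0,\\[2pt] 0,&\text{otherwise.}\end{cases}
\]

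Third, I would invoke Lemma~\ref{lemma:sb} to partition $V$ into EF-components, each being either a single F-clique or two F-cliques $X\cup Y$ joined by full cross-enmity. For any pair $\{i,j\}$, every agent outside both their EF-components is impartial to each and hence neutral; inside, $a_{ij}, b_{ij}, t_{ij}$ read off the relevant F-clique sizes directly. If $i,j$ are friends (same F-clique) then $a_{ij}=b_{ij}=0$ and $t_{ij}=n-2$, so $\delta_{ij}=1$. If $i,j$ are enemies in F-cliques of sizes $A$ and $B$ of a common EF-component, then $a_{ij}=A-1$, $b_{ij}=B-1$, $t_{ij}=n-A-B$, so $\delta_{ij}=1$ exactly when $\max(A,B)<n/2$, $\delta_{ij}=1/2$ when $\max(A,B)=n/2$ and $\min(A,B)<n/2$, and $\delta_{ij}=0$ otherwise. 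An analogous explicit formula, indexed by the at most four F-cliques inside $i$'s and $j$'s EF-components, covers impartial pairs.

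Finally, I would establish the bound by an extremal argument: among all SB networks on $n$ vertices, $\sum_{\{i,j\}}\delta_{ij}$ is minimised by the configuration consisting of two F-cliques of sizes $\lfloor n/2\rfloor$ and $\lceil n/2\rceil$ (arranged either as a single EF-component joined by cross-enmity or as two disjoint EF-components). In both extremal networks only the intra-clique friend pairs contribute, so $\sum\delta_{ij}=\binom{\lfloor n/2\rfloor}{2}+\binom{\lceil n/2\rceil}{2}$, which equals $n(n-2)/4$ for even~$n$ and $(n-1)^2/4$ for odd~$n$; each is at least $n(3n-8)/16$ by elementary algebra ($4(n-2)\geq 3n-8$ and $4(n-1)^2\geq n(3n-8)$ both reduce to trivially true inequalities). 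The main obstacle is the extremal claim itself: I would attack it via a smoothing argument showing that every local SB-preserving operation on the network — merging two EF-components into one, splitting an F-clique into two enmity-joined halves, or shifting a vertex between the two F-cliques of a single EF-component — weakly increases $\sum\delta_{ij}$, allowing a reduction to the two-equal-clique configuration. Handling the many subcases of the EF-component taxonomy of Lemma~\ref{lemma:sb} is where the bulk of the work lies.
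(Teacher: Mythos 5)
Your reduction to the combinatorial inequality $\sum_{\{i,j\}}\delta_{ij}\geq n(3n-8)/16$, the partisan/neutral classification of the $n-2$ voters, and the resulting formula for $\delta_{ij}$ in terms of $a_{ij},b_{ij},t_{ij}$ under structural balance are all correct, and they mirror the per-pair vote analysis with which the paper's proof begins. The problem is the final step, which carries essentially all of the content and which you explicitly leave undone: the extremal claim is not only unproven but false as stated. Take $n=8$ and the SB network consisting of one EF-component made of F-cliques $A$ and $B$ with $|A|=5$, $|B|=1$ (full cross-enmity), together with a second EF-component that is a single F-clique $S$ with $|S|=2$. The $\binom{5}{2}+\binom{2}{2}=11$ friend pairs each give $\delta_{ij}=1$; every enemy pair $(i,j)\in A\times B$ has $a_{ij}=4$, $b_{ij}=0$, $t_{ij}=2<|a_{ij}-b_{ij}|$, hence $\delta_{ij}=0$; and every impartial pair in $S\times(A\cup B)$ has $t_{ij}=0$, hence $\delta_{ij}=0$. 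So $\sum_{\{i,j\}}\delta_{ij}=11<12=\binom{4}{2}+\binom{4}{2}$, and the two-equal-clique configuration is not the minimiser; a smoothing argument aimed at reducing everything to that configuration cannot succeed. (The proposition itself is safe here, since $n(3n-8)/16=8\leq 11$, but your route to it collapses.)

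The paper sidesteps the extremal problem entirely. It splits into cases: if every F-component has size below $n/2$ and every two F-components lying in different EF-components have total size below $n/2$, then every pair satisfies the strict partisan inequalities and the mechanism attains $1-(1-q)^2$ on each duple; otherwise it identifies the offending large component(s), counts the pairs that could fail, bounds their contribution below crudely by $q$ (i.e.\ $\delta_{ij}\geq 0$ rather than its exact value), and minimises the resulting quadratic in the component sizes, the worst case $b=n/4$ yielding the constant $(3n-8)/(8(n-1))$. To complete your argument you would either need a correct identification of the true minimiser of $\sum\delta_{ij}$ over SB networks (a harder problem than the proposition requires) or, more economically, adopt this kind of coarse counting of potentially bad pairs.
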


\begin{proof}

 Take any network $G$ that satisfies structural balance with $n \geq 3$ agents. Take any two agents, $i$ and $j$, such that $i \neq j$. Notice that, under the duples mechanism, the remaining agents would vote as follows:
\begin{itemize}
\item Agents in $F_i \cap F_j$, $I_i \cap I_j$, and $E_i \cap E_j$ vote for a needy agent.
\item Agents in $(F_i\setminus (F_j \cup \{j\})) \cup I_i \cap E_j$ vote for $i$, regardless of the neediness of the agents.
\item Agents in $(F_j\setminus (F_i \cup \{i\})) \cup I_j \cap E_i$ vote for $j$, regardless of the neediness of the agents.
\end{itemize}
The three cases above cover all the agents in $V\setminus \{i,j\}$.
Given the observation above, the probability of choosing a needy agent in graph $G$ from among agents $i$ and $j$ is equal to $1-(1-q)^2$ if
\begin{equation}
\label{eq:duples:sbalance:1}
|F_i\setminus (F_j \cup \{j\})| + |I_i \cap E_j| < \frac{n-2}{2} \textrm{ and } |F_j\setminus (F_i \cup \{i\})| + |I_i \cap E_j| < \frac{n-2}{2},
\end{equation}
and it is at least $q$, otherwise.
To see that this is the case, notice that what matters is which of the agents is chosen when exactly one of them is needy. Suppose this is the case.
If both the inequalities in~\eqref{eq:duples:1} are strict then the mechanism chooses the needy agent. Thus the probability of choosing the needy agent is equal to $1 - (1-q)^2 = q + q(1-q)$ in this case. On the other hand, in any case the mechanism chooses a needy agent with probability at least $q$.

If all the F-components in the network are of size smaller than $n/2$ and the sum of sizes of any two F-components which are not parts of the same EF-component is less than $n/2$, the duples mechanism chooses a needy agent with probability equal to $1-(1-q)^2$. To see why, suppose that the mechanism picked two agents, $i$ and $j$ for the remaining nodes to vote on. If both these agents are friends then they belong to the same F component. By the structural balance property, $F_i\setminus \{j\} = F_j\setminus \{i\}$, $E_i = E_j$, and $I_i = I_j$. Hence every agent in $V\setminus \{i,j\}$ is either a common friend, a common enemy or a common impartial of $i$ and $j$. Hence every agent in $V\setminus \{i,j\}$ votes for a needy agent from among $i$ and $j$.
If $i$ and $j$ are enemies then, by the structural balance property, $E_i\setminus \{j\} = F_j$, $E_j\setminus \{i\} = F_i$, and $I_i = I_j$, and $F_i \cap F_j = \varnothing$. Hence all the common impartials of $i$ and $j$ vote for a needy agent from among $i$ and $j$. All the agents in $F_i$ vote for agent $i$, regardless of the neediness statuses of $i$ and $j$ and, similarly, all the agents in $F_j$ always vote for agent $j$. Since each F-component has size less than $n/2$ so $|F_i| < n/2-1$ and $|F_j| < n/2-1$. This, together with $F_i \cap F_j = \varnothing$ and $I_i \cap E_j = I_j \cap E_i = \varnothing$ implies that condition~\eqref{eq:duples:sbalance:1} is satisfied for $i$ and $j$. Thus the duples mechanism chooses a needy agent from among $i$ and $j$ with probability equal to $1-(1-q)^2$.
Lastly, if $i$ and $j$ are impartial, then, by structural balance, $F_i\cap F_j = F_i \cap E_j = E_i \cap F_j = \varnothing$. Hence, $E_i \subseteq I_j$
and $E_j \subseteq I_i$. Moreover, the set of enemies of $i$ forms an F-component that is not a part of an EF-component of $i$ and the set of enemies of $j$ forms an F-component that is not a part of an EF-component of $j$. Thus $F_i\setminus (F_j \cup \{j\}) = F_i$, $I_i \cap E_j = E_j$, $F_j\setminus (F_i \cup \{i\}) = F_j$, and $I_j \cap E_i = E_i$. Since the sum of sizes of any two F-components which are not parts of the same EF-component is less than $n/2$ so
\begin{align*}
|F_i\setminus (F_j \cup \{j\})| + |I_i \cap E_j| & = |F_i| + |E_j| < n/2-1 \textrm{ and}\\
|F_j\setminus (F_i \cup \{i\})| + |I_j \cap E_i| & = |F_j| + |E_i| < n/2-1
\end{align*}
and so condition~\eqref{eq:duples:sbalance:1} is satisfied for $i$ and $j$. Thus the duples mechanism chooses a needy agent from among $i$ and $j$ with probability equal to $1-(1-q)^2$.

Suppose that there exists an F-components in the network of size greater or equal to $n/2$. Call it $A$ and call the other F-component in the EF-component of $A$, $B$. Let $a = |A|$ and $b = |B|$, so that $a \geq n/2$. By the analysis above, the duples mechanism selects a needy agent with probability less than $1-(1-q)^2$ if it picks one agent $i$ in $C$ and an agent $j$ in $B$ ($i$ and $j$ are enemies in this case), or one agent in $A \cup B$ and another agent in $V\setminus (A\cup B)$ ($i$ and $j$ are impartial in this case). In all these cases the probability of choosing a needy agent is at least $q$ and in the remaining cases the probability of choosing a needy agent is $1 - (1-q)^2$. Hence the probability of choosing a needy agent is at least
\begin{align*}
& \left(\frac{2ab}{n(n-1)} + \frac{2(a+b)(n-a-b)}{n(n-1)}\right)q \\
& \qquad \qquad \qquad {} + \left(1-\frac{2ab + 2(a+b)(n-a-b)}{n(n-1)}\right) (1 - (1-q)^2)\\
& \qquad = q + \left(1-\frac{2((a+b)(n-b)-a^2)}{n(n-1)}\right) q(1-q)\\
& \qquad \geq q + \left(1-\frac{n^2 - 4b^2 + 2bn}{2n(n-1)}\right) q(1-q)\\
& \qquad \geq q + \frac{3n - 8}{8(n - 1)}q(1-q)
\end{align*}
where the first inequality follows because the expression increases in $a$ for $a \geq n/2$ and the second inequality follows because the expression is maximised when $b = n/4$.

Suppose all the F-components are of size less than $n/2$ and there exist two F-components in different EF-component of the total size greater or equal to $n/2$. Then there can be at most four distinct F-components, $A$, $B$, $C$, and $D$, such that $|A\cup B| \geq n/2$, $|C\cup D| \geq n/2$, 
$A$ and $B$ are in different EF-components and $C$ and $D$ are in different EF-components. Suppose also that $C$ and $B$ are in different EF-components and $D$ and $A$ are in different EF-components (this is without loss of generality, because if $C$ was in the same EF-component as $B$ or $D$ was in the same EF-component as $A$, we could swap the symbols denoting components $C$ and $D$).
Let $a = |A|$, $b = |B|$, $c = |C|$, and $d = |D|$. 
Suppose that there are four such components. Then it must be that $a + b = n/2$ and $c + d = n/2$.
By the analysis above, the duples mechanism selects a needy agent with probability less than $1-(1-q)^2$ if it picks two impartial agents, $i$ and $j$, such that either $|F_i \cup E_j| \geq n/2$ or $|F_j \cup E_i| \geq n/2$. Thus the probability of choosing such two agents is maximised when agents in $C$
are in the same EF-component as $A$ and the agents in $D$ are in the same EF component as $B$. If $a = b = c = d = n/4$ then a needy agent is picked with probability less than $1-(1-q)^2$ if one agent is selected from the EF-component $A\cup C$ and another one from the EF-component $B\cup D$. The probability of selecting a needy agent is then equal to 
\begin{align*}
& \frac{2(a+c)(b+d)}{n(n-1)}q + \left(1-\frac{2(a+c)(b+d)}{n(n-1)}\right) (1 - (1-q)^2)\\
& \qquad = q + \left(1-\frac{2(a+c)(b+d)}{n(n-1)}\right) q(1-q)\\
& \qquad \geq q + \frac{n - 2}{2(n - 1)} q(1-q) \geq q + \frac{3n - 8}{8(n - 1)}q(1-q).
\end{align*}
A similar analysis shows that in all the remaining cases the probability of choosing a needy agent when all the F-components are of size less than $n/2$ and there exist two F-components in different EF-component of the total size greater or equal to $n/2$, is greater than $q + (3n - 8)q(1-q)/(8(n - 1))$.\end{proof}


\end{document}